\newtheorem{theorem}{Theorem}
\newtheorem{definition}[theorem]{Definition}
\newtheorem{lem}[theorem]{Lemma}
\newcommand{\expectation}[2]{\mbox{$ \Braket{ #1 | #2 | #1 } $}}
\newcommand{\proj}[1]{ \ket{#1}\!\bra{#1} }
\newcommand{\beq}{\begin{eqnarray}}
\newcommand{\eeq}{\end{eqnarray}}
\newcommand\abs[1]{\left|#1\right|}
\newcommand\norm[1]{\left\Vert#1\right\Vert}
\newcommand{\ps}{\ket{\psi}}
\newcommand{\C}{\mathbb C}
\newcommand{\N}{\mathbb N}
\newcommand{\veven}[1]{V_{#1}^\text{even}}
\newcommand{\vodd}[1]{V_{#1}^\text{odd}}
\newcommand{\saz}{\sigma_\alpha^z}
\newcommand{\sax}{\sigma_\alpha^x}
\newcommand{\nin}{\not\in}
\newcommand{\seq}{\subseteq}
\newcommand{\m}[1]{\mathcal{#1}}
\renewcommand{\set}[1]{\left\{#1\right\}}
\newcommand{\induce}{attain}
\DeclareMathOperator{\Schmidt}{Sch}
\DeclareMathOperator{\tr}{Tr}
\newcommand{\matwo}[4]{\left( \begin{array}{cc}#1& #2 \\ #3 &#4 \end{array} \right)}
\begin{document}
\title{Unconditional separation of finite and infinite-dimensional quantum correlations}

\author{Andrea Coladangelo}
\author{Jalex Stark}

\affil{Computing and Mathematical Sciences,	
Caltech\\ 	
% , 1200 E California Blvd, Pasadena, CA 91125, United States
{\{acoladan,jalex\}@caltech.edu}
}

\maketitle

\abstract{
\noindent Determining the relationship between quantum correlation sets is a long-standing open problem. 
The most well-studied part of the hierarchy is captured by the chain of inclusions
$\m C_q \seq \m C_{qs} \subsetneq \m C_{qa} \seq \m C_{qc}$. 
The separation $\m C_{qs} \neq \m C_{qa}$, showing that the set of quantum spatial correlations is not closed, was proven in breakthrough work by Slofstra \cite{slofstra2016tsirelson,slofstra2017set}.
Resolving the question of $\m C_{qa} = \m C_{qc}$ would resolve the Connes Embedding Conjecture and would represent major progress in the mathematical field of operator algebras.
% finite-dimensional quantum correlations (in the tensor-product model) is strictly contained in the set of infinite-dimensional ones (i.e.\ whether $\m{C}_q \neq \m{C}_{qs}$)
% and whether the set of limits of finite-dimensional quantum correlations is strictly contained in the set of quantum correlations that can be attained in the commuting-operator model (i.e whether $\m{C}_{qa} \neq \m{C}_{qc}$). 
In this work, we resolve the ambiguity in the first inclusion, showing that $\m{C}_q \neq \m{C}_{qs}$. We provide an explicit construction of a correlation that can be attained on a tensor product of infinite-dimensional Hilbert spaces but not finite-dimensional ones. This property is also conjectured to be possessed by any correlation which maximally violates the $I_{3322}$ inequality.
%Our correlation is not a Bell inequality; it is conjectured that the $I_{3322}$ Bell inequality witnesses the same separation as our correlation.
% Our result is related to the $I_{3322}$ inequality in the sense that our separating correlation is the first example exhibiting precisely the property that is conjectured to be possessed by any correlation which maximally violates the $I_{3322}$ inequality.

\section{Introduction}

Let Alice and Bob be two spacelike-separated provers. Consider the scenario in which a verifier sends one question to each prover and receives an answer from each prover. The behaviour of the provers is captured by the joint distribution of their answers as a function of their questions. We refer to this data as a \emph{bipartite correlation}. One of the most fundamental questions one can ask about a correlation is ``in which models of physics can the correlation be realized?''. Some correlations can be realized in classical physics if one allows the provers to share randomness ahead of time. However, it is well-known that some correlations require quantum resources to realize \cite{Bell1964}. In fact, different models of quantum mechanics admit different sets of correlations. Characterizing the relationship between these sets is a long-standing problem. %, known as Tsirelson's problem.

We say that a correlation is in the set of \emph{quantum correlations} $\m C_q$ if there is a finite-dimensional state $\ket\psi$ and finite-dimensional projective measurements $\set{A_x^a}, \set{B_y^b}$ so that 
\begin{equation}\label{eq:definition-of-induce}
	p(a,b| x,y) = \braket{\psi|A_x^a \otimes B_y^b|\psi},
\end{equation}
where $p(a,b| x,y)$ is the probability that Alice answers $a$ and Bob answers $b$, given that Alice was asked question $x$ and Bob was asked question $y$. We say that a correlation is in the set of \emph{quantum spatial correlations} $\m C_{qs}$ if Equation \eqref{eq:definition-of-induce} holds with a state and measurements that are possibly infinite-dimensional. Notice that $\m C_q \subseteq \m C_{qs}$. %For the rest of the paper, we will use the terms \textit{finite-dimensional} and \textit{infinite-dimensional} quantum correlations to refer to $\m C_{q}$ and $\m C_{qs}$ respectively. 

We say that correlation is in the set of \emph{quantum-approximate correlations} $\m C_{qa}$ if it is arbitrarily well-approximated by correlations in $\m C_{q}$. In other words, $\m C_{qa}$ is the closure of $\m C_{q}$. From \cite{scholz2008tsirelson}, we know that also $\m C_{qs} \subseteq \m C_{qa}$, hence $\m C_{qa}$ is also the closure of $\m C_{qs}$. Finally, one can define the set of \emph{quantum commuting correlations} $\m C_{qc}$ as the set of possibly infinite-dimensional quantum correlations arising in the commuting operator model. Note that these definitions all assume finite question and answer sets.

Following breakthrough work by Slofstra, showing first that $\m C_{qs} \neq \m C_{qc}$ \cite{slofstra2016tsirelson} and later strengthening this to $\m C_{qs} \neq \m C_{qa}$ \cite{slofstra2017set}, the known hierarchy so far is the following:
\begin{equation}\label{eq:four-correlation-sets}
	\m C_q \subseteq \m C_{qs} \subsetneq \m C_{qa} \subseteq \m C_{qc}.
\end{equation}

This ``four correlation sets'' picture, along with the explicit study of $\m C_{qs}$, was introduced by Paulsen and coauthors. \cite{paulsen2014problems,paulsen2015quantum,dykema2016synchronous} Our main theorem is that the first inclusion in Equation \eqref{eq:four-correlation-sets} is also strict: $\m C_q \neq \m C_{qs}$. In particular, we give an explicit correlation which can be attained in infinite dimensions, and we show that it cannot be attained in finite dimensions.

\subsection{Related work}
The primary related work is the study of the $I_{3322}$ inequality, first introduced in \cite{I3322Froissart}. In \cite{I3322Pal}, P\'al and V\'ertesi give numerical evidence suggesting that finite-dimensional states are not enough to attain maximal violation of the inequality, but they conjecture that infinite-dimensional states suffice. In this paper, we are providing a correlation on slightly larger question and answer sets, which we prove exhibits precisely this conjectured property. 

In a related line of work, \cite{slofstra2017set}, and the subsequent \cite{dykema2017non, ji2018three}, provide non-local games which require arbitrarily high-dimensional strategies to attain win probabilities arbitratily close to $1$. However, the sequences of ideal strategies for these games do not have a limit, since they require arbitrarily high amounts of entanglement entropy. So they separate $\m C_{qs}$ from $\m C_{qa}$ but do not shed any light on the relationship between $\m C_{q}$ and $\m C_{qs}$.

Other related works have considered the relaxed setting which allows either the question set or the answer set to be countably infinite. To the best of the authors' knowledge, \cite{manvcinska2014unbounded} was the first result giving a non-local game (with classical questions) whose optimal winning probability can be approximated arbitrarily well, but not achieved perfectly, with finite-dimensional states. The game has two questions per party and countably infinite answer sets. However, again the sequence of correlations presented there does not have a limit, since the correlations consist of uniform distributions on increasingly large sets. On the other hand, a previous work by the present authors \cite{cs2017separation} provides sequences of correlations, on either infinite question sets (and finite answer sets) or infinite answer sets (and finite question sets), that cannot be attained in finite-dimensions but can be attained explicitly in infinite dimensions. The present result is a significant strengthening of this, achieving the same property while overcoming the major hurdle of maintaining both finite question and answer sets.

\subsection{Our result}
Inspired by the self-testing techniques of \cite{CGS16} and \cite{cs2017separation}, we construct an explicit correlation, on question sets of sizes $4$ and $5$ and answer sets of size $3$ for both parties, which can be attained using an infinite-dimensional quantum strategy, but not any finite-dimensional one.

So far, we have referred to $\m C_q$ as the set of quantum correlations and and $\m C_{qs}$ as the set of quantum spatial correlations in order to match the literature. Since we will be using only these two of the four correlation sets, we will find it convenient to now refer to them as the set of \emph{finite-dimensional quantum correlations} and the set of \emph{infinite-dimensional quantum correlations}. In particular, let $\m{C}_q^{m,n,r,s}$ ($\m{C}_{qs}^{m,n,r,s}$) be the set of finite-dimensional (resp. infinite-dimensional) quantum correlations on question sets of sizes $m$ and $n$ and answer sets of sizes $r$ and $s$. (Notice that $\m C_q = \bigcup_{m,n,r,s<\infty}\m C_q^{m,n,r,s}$ and similarly for $\m C_{qs}$.) Our result is the following:

\begin{theorem}
\label{thm: main}
    $\m{C}_q^{4,5,3,3} \neq \m{C}_{qs}^{4,5,3,3} $.
\end{theorem}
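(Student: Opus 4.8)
The strategy is to exhibit an explicit correlation $p \in \m C_{qs}^{4,5,3,3}$ and prove it is \emph{not} in $\m C_q^{4,5,3,3}$ by a rigidity (self-testing) argument: I will show that \emph{any} (finite- or infinite-dimensional) quantum strategy reproducing $p$ must, up to local isometry, contain a fixed reference strategy $(\ket{\psi_{\mathrm{ref}}}, \{A_x^a\}, \{B_y^b\})$ acting on a canonical reference Hilbert space, and that this reference space is necessarily infinite-dimensional. Concretely, I expect the reference state to be an ``EPR-like'' state $\epra$ on $\ell^2(\N) \otimes \ell^2(\N)$ with unnormalizable Schmidt coefficients — i.e. an infinite sum $\sum_n \alpha^n \ket{n}\ket{n}$ that only makes sense after normalization but whose reduced density matrix has infinite rank and whose Schmidt coefficients do not decay fast enough to be truncated — so that no finite-dimensional state can reproduce the required marginals exactly.

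\medskip
\textbf{Step 1: Construct the correlation.} Following the self-testing philosophy of \cite{CGS16,cs2017separation}, I would take a known rigid building block — a family of correlations that self-tests a tunable partially-entangled pair of qubits together with a pair of anticommuting $\pm 1$ observables on each side (the macros $\saz$, $\sax$, $\epra$, $\alphaC$ suggest exactly this $\alpha$-deformed setup) — and then \emph{glue infinitely many copies together} using a small, fixed number of questions. The key design trick must be a ``ladder'' or ``shift'' gadget: a constant number of extra question/answer pairs that force the observables at ``level $n$'' and ``level $n+1$'' of the decomposition to be related by a fixed ratio $\alpha$, so that a single correlation on $4\times 5$ questions and $3\times 3$ answers encodes the entire infinite tower. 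The third outcome is presumably used either as a ``flag'' outcome partitioning the Hilbert space into sectors, or to implement the shift relation. I would write down $p$ explicitly as the correlation produced by the infinite-dimensional reference strategy on $\epra$, and verify directly that it lies in $\m C_{qs}^{4,5,3,3}$ (this direction is a routine computation once the strategy is fixed).

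\medskip
\textbf{Step 2: Prove rigidity.} Given any strategy $(\ket\psi,\{A_x^a\},\{B_y^b\})$ on $H_A \otimes H_B$ attaining $p$, derive algebraic relations among the measurement operators from the correlation values: anticommutation of the two observables within each level (from the $\alpha$-CHSH-type sub-correlations), the shift relations linking consecutive levels, and orthogonality of the sectors. From these relations, build local isometries $V_A : H_A \to H_A' \otimes \ell^2(\N)$ and $V_B$ analogously, and show $(V_A \otimes V_B)\ket\psi = \ket{\mathrm{junk}} \otimes \epra$ with the observables acting in the canonical way on the $\ell^2(\N)$ factor. The engine here is the standard ``swap isometry'' argument iterated along the ladder; the new ingredient is showing the iteration \emph{does not terminate} — i.e. that none of the ladder sectors can be empty, which is forced by the exact values of $p$ (any truncation would change some probability).

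\medskip
\textbf{Step 3: Conclude infinite-dimensionality, and identify the obstacle.} Once rigidity gives $\ket\psi \cong \ket{\mathrm{junk}} \otimes \epra$ with $\epra$ of infinite Schmidt rank, it follows immediately that $H_A, H_B$ are infinite-dimensional, so $p \notin \m C_q^{4,5,3,3}$, proving Theorem~\ref{thm: main}. \emph{The main obstacle} is Step~2 combined with the constraint in Step~1: making the rigidity argument work \emph{exactly} (not just robustly) with only $4$ and $5$ questions and $3$ outcomes. Robust self-testing statements in the literature typically tolerate $\epsilon$-error and are proven for nicer correlations with more questions; here I need the clean limit, I need the ladder to be implemented with a genuinely bounded number of symbols independent of $n$, and I need to rule out the possibility that a clever finite-dimensional strategy reproduces $p$ by ``wrapping around'' the ladder (e.g. making level $N$ secretly coincide with level $0$) — this is exactly the scenario the shift relation plus the exact probability constraints must be engineered to forbid. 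Getting the gadget small enough while keeping rigidity airtight is the crux of the whole construction.
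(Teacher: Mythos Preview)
Your high-level plan---construct $p^*$ from the geometric state $\sqrt{1-\alpha^2}\sum_i \alpha^i\ket{ii}$ using two offset tilted-CHSH block structures plus a ``shift'' question, then argue infinite Schmidt rank---matches the paper exactly. The gap is in Step~2, specifically the sentence ``the engine here is the standard swap isometry argument iterated along the ladder.'' With only $4\times 5$ questions you get exactly \emph{two} tilted-CHSH sub-games (questions $\{0,1\}$ and questions $\{2,3\}$), hence two applications of the swap isometry, each extracting a \emph{single} qubit pair and leaving an unconstrained auxiliary state. There is no third rung to step to: you have used all your questions, so the iteration cannot continue and you never reach the full statement $(V_A\otimes V_B)\ket\psi = \ket{\mathrm{junk}}\otimes\epra$. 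The paper's own ``cartoon proof'' section flags essentially this obstruction (the auxiliary piece may carry uncontrolled entanglement), and the paper does \emph{not} prove full self-testing of $\epra$.

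What the paper does instead is extract the conclusion from just those two decompositions without further iteration. From the even-block test one gets $\Phi(\ket\psi)=\frac{1}{\sqrt{1+\alpha^2}}(\ket{00}+\alpha\ket{11})\otimes\ket{\mathrm{aux}}$; from the odd-block test (plus the weight-$1/C$ singleton block) one gets $\Phi'(\ket\psi)=\frac{1}{\sqrt C}\ket{22}\otimes\ket{\mathrm{aux}''}\oplus\sqrt{\tfrac{C-1}{C}}\,\frac{1}{\sqrt{1+\alpha^2}}(\ket{11}+\alpha\ket{00})\otimes\ket{\mathrm{aux}'}$. The fifth question $y=4$ forces $\Pi_{A_0}^0\ket\psi=(\Pi_{A_2}^2+\Pi_{A_2}^0)\ket\psi$ and $\Pi_{A_0}^1\ket\psi=\Pi_{A_2}^1\ket\psi$, which aligns the two decompositions on the level of Schmidt multisets: writing $S=\Schmidt(\ket\psi)=S_0\sqcup S_1$, one obtains a bijection $f:S_0\to S_1$, $\lambda\mapsto\alpha\lambda$, from the first isometry and a bijection $g:S_1\to S_0\setminus S_2$, $\lambda\mapsto\alpha\lambda$, from the second (with $S_2\neq\emptyset$ coming from the $\ket{22}$ piece). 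Then $g\circ f$ is a bijection from $S_0$ onto a proper subset of itself, so $S_0$ is infinite. This Schmidt-multiset bijection is the missing idea that replaces your proposed infinite iteration; once you have it, Step~3 goes through as you wrote.
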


% \paragraph{Organization}
% Section \ref{sec: preliminaries} gives some preliminary notions. Section \ref{sec: separating correlation} describes the separating correlation and section \ref{sec: proof} proves that any strategy achieving it must be infinite-dimensional. Section \ref{sec: conclusion} concludes. Below, we give an intuitive overview of the main ideas in the proof.

\subsection{A cartoon proof}
\label{sec: cartoon proof}
We give a very concise overview of the structure of our proof of Theorem \ref{thm: main}. To explain the argument, we start by giving an idealized version that runs against a barrier, and then talk about how to avoid the barrier.

We introduce an ideal correlation $p^*$, and then use self-testing techniques to guarantee that, given any strategy on some bipartite state $\ket{\psi}$ that \induce s $p^*$, there exist local isometries which take the state $\ket\psi$ to states of a certain form. In fact, suppose that ``by magic'' we knew that achieving the ideal correlation guaranteed the following two things. First, there is a local isometry $\Phi = \Phi_A\otimes \Phi_B$ such that 
\begin{equation}\label{eq:cartoon-decomposition-1}
    \Phi(\ket\psi) = \frac{1}{\sqrt{1+\alpha^2}}(\ket{00} + \alpha\ket{11})\otimes \ket{aux}.
\end{equation}
 
Next, there is another local isometry $\Phi'$ such that
\begin{equation}\label{eq:cartoon-decomposition-2}
    \Phi'(\ket\psi)=  \ket{\phi} \oplus \frac{1}{\sqrt{1+\alpha^2}}(\ket{00}+\alpha\ket{11})\otimes \ket{aux'},
\end{equation}
where $\oplus$ denotes a direct sum and the state $\ket{\phi}$ is separable, i.e.\ has Schmidt rank $1$. Then suppose towards a contradiction that $\ket\psi$ were finite-dimensional. From the first condition we see that the Schmidt rank of the state is even, while from the second condition we see that the Schmidt rank of the state is odd; contradiction.

In the above, the ``magic'' happens where we assume that $\ket\phi$ is separable. In general, any correlation that is \induce d using a separable $\ket\phi$ could also be \induce d by tensoring with extra entanglement and not making use of it in the measurements, so we will not be able to assume that $\ket\phi$ is separable. However, our main argument will still decompose $\ket\psi$ into two ways as in equations \eqref{eq:cartoon-decomposition-1} and \eqref{eq:cartoon-decomposition-2}. In place of the odd / even constraints, we will show that these decompositions partition the Schmidt coefficients into two different ways so that the set of nonzero Schmidt coefficients of $\ket{\psi}$ is in bijection with a proper subset of itself.

 % are related to each other in a useful way. More specifically, we'll show the existence of an infinite sequence of Schmidt coefficients $a_n$ so that Equation \eqref{eq:cartoon-decomposition-1} enforces that $\frac{a_{2n}}{a_{2n+1}} = \alpha$ for all $n$, while Equation \eqref{eq:cartoon-decomposition-2} enforces that $\frac{a_{2n+1}}{a_{2n+2}}=\alpha$ for all $n$.

\subsection{Organization}
Section \ref{sec: preliminaries} covers some preliminary notions. Section \ref{sec: direct sums} formalizes the notion of a direct sum of correlations and proves that a certain block structure in a correlation implies a similar direct sum decomposition of the state and measurements achieving the correlation. %We'll also state the properties of the \emph{tilted CHSH correlation} introduced in ??, which allows to enforce that some pairs of Schmidt coefficients have a specified ratio $\alpha$. 
In Section \ref{sec: separating correlation}, we describe the separating correlation by specifying the infinite dimensional state and measurements that \induce\ it exactly. In Section \ref{sec: proj}, we apply self-testing techniques to establish properties of any state and measurements achieving the separating correlation; these properties will be similar to Equations \eqref{eq:cartoon-decomposition-1} and \eqref{eq:cartoon-decomposition-2}. Finally in Section \ref{sec: schmidt}, we will use these properties of the state to show that it has infinitely many nonzero Schmidt coefficients.

\section{Preliminaries}
\label{sec: preliminaries}
For a positive integer $n$, we denote by $[n]$ the set $\{1,..,n\}$. $\delta_{ij}$ is the Kronecker delta. For a Hilbert space $\mathcal{H}$, $\mathcal{L}(\m H)$ is the space of linear operators on $\mathcal{H}$. 
Define the Pauli matrices
\begin{equation*}
	\sigma^z = \matwo 100{-1}\text{, and }\sigma^x = \matwo 0110.
\end{equation*}
% Let $\sigma^z = \begin{bmatrix}
% 1 & 0 \\
% 0 & -1
% \end{bmatrix}
% $ and $\sigma^x = \begin{bmatrix}
% 0 & 1 \\
% 1 & 0
% \end{bmatrix}$. 
For an operator $T \in \mathcal{L}(\m H)$ and a subspace $\m H' \seq \m H$ invariant under $T$, we denote by $T|_{\m H'} \in \mathcal{L}(\m H')$ the restriction of $T$ to $\m H'$. Let $\C^\N$ denote the Hilbert space of square-summable sequences, sometimes called $\ell^2(\C)$. We endow it with a standard basis $\set{\ket i: i\in \N}$. Formally, $\C^\N = \set{\sum_i a_i \ket i: \sum_i \abs{a_i^2} < \infty}$.
%Working with an infinite-dimensional vector space introduces slightly subtle topological issues which we do not address explicitly. 

% Any reader with enough knowledge to worry about these issues should be able to resolve the details for themselves.
% $\Schmidt\left(\ps\right)$. %For a quantum state $\ket{\psi}$, we denote by $\Schmidt\left(\ps\right)$ the set of Schmidt coefficients of $\ket{\psi}$. 
\renewcommand{\induce}{induce}
\paragraph{Correlations and quantum strategies}
Given sets $\m{X}$,$\m{Y}$, $\m{A}$, $\m{B}$, a (bipartite) \textit{correlation} is a collection $\{p(a,b|x,y): a\in \m{A}, b \in \m{B}\}_{(x,y)\in \m{X}\times \m{Y}}$, where each $p(\cdot,\cdot|x,y)$ is a probability distibution over $\m A\times \m B$. We interpret the correlation as describing the outcomes of a measurement scenario with two parties, say Alice and Bob. $p(a,b|x,y)$ is the probability that Alice outputs $a$ and Bob outputs $b$, given that Alice used measurement setting $x$ and Bob used setting $y$. $\m{X}$ and $\m{Y}$ are referred to as the \emph{question sets}, while $\m{A}$ and $\m{B}$ are referred to as as the \emph{answer sets}.

Given question sets and answer sets $\m{X}$, $\m{Y}$, $\m{A}$, $\m{B}$,
a \textit{quantum strategy} is specified by Hilbert spaces $\m{H}_A$ and  $\m{H}_B$, a pure state $\ket{\psi} \in \m{H}_A \otimes \m{H}_B$, and projective measurements $\{\Pi^a_{A_x}\}_a$ on $\m{H}_A$, $\{\Pi^b_{B_y}\}_b$ on $\m{H}_B$, for $x \in \m{X}, y \in \m{Y}$. We say that it \emph{induces correlation $p$} if
\begin{equation}
	p(a,b|x,y) = \bra{\psi} \Pi_{A_x}^a \otimes \Pi_{B_y}^b \ket{\psi} \text{ for all }a \in \m{A}, b \in \m{B}, x \in \m{X}, y \in \m{Y}.
\end{equation}
Sometimes we refer to a quantum strategy as a triple $\left(\ket{\psi}, \{\Pi^a_{A_x}\}_a, \{\Pi^b_{B_y}\}_b \right)$. If we wish to emphasize the underlying Hilbert space, we write
$\left(\ket{\psi} \in \m{H}_A \otimes \m{H}_B, \{\Pi^a_{A_x}\}_a, \{\Pi^b_{B_y}\}_b \right)$.
Notice that we have chosen our state to be pure and our measurements to be projective. This choice is without loss of generality. The most general measurements are modeled by POVMs, but Naimark's dilation theorem implies that any correlation \induce d using POVMs can also be \induce d using projective measurements (possibly of larger dimension). Likewise, any correlation \induce d by a mixed state can also be \induce d by using a purification of that state.
 % We are taking the measurements to be projective without loss of generality by appealing to Naimark's dilation theorem. 
 We sometimes describe a quantum strategy by specifying an observable for each question. The observables in turn specify the projectors through their eigenspaces. 
 %Note that in order to concisely describe a strategy, we will sometimes simply specify the observables, which in turn determine the projective measurements.

A correlation is said to be quantum if there exists a quantum strategy that \induce s it. We refine this, and we say that a quantum correlation is \textit{finite-dimensional} (\textit{infinite-dimensional}) if it is \induce d by a quantum strategy on a finite-dimensional (infinite-dimensional) Hilbert space. We denote by $\m{C}_q^{m,n,r,s}$ and $\m{C}_{qs}^{m,n,r,s}$ respectively the sets of finite and infinite-dimensional quantum correlations on question sets of sizes $m, n$ and answer sets of sizes $r,s$. 

%\paragraph{Quantum Strategies} A quantum strategy over question sets $\m{X}$, $\m{Y}$ and answer sets $\m{A}$, $\m{B}$ is specified by Hilbert spaces $\m{H}_A$, $\m{H}_B$, a joint state $\ket{\psi} \in \m{H}_A \otimes \m{H}_B$, and sets of projective measurements $\{\Pi_{A_x}^{a}\}_{a \in \m{A}}$ for $x \in \m{X}$, and $\{\Pi_{B_y}^{b}\}_{b \in \m{B}}$ for $y \in \m{Y}$, on $\m{H}_A$ and $\m{H}_B$ respectively. We concisely denote a quantum strategy by the triple $(\ps, \{\Pi_{A_x}^{a}\}_{a}, \{\Pi_{B_y}^{b}\}_{b})$, when the underlying Hilbert spaces, question sets and answer sets are clear from the context. Without loss of generality, we restrict ourselves to pure states and projective measurements (rather than mixed states and POVMs). This is because, by Neimark's dilation theorem, any correlation achievable by a mixed state and POVMs in finite dimensions can also be achieved with a pure state and projective measurements in finite dimensions. We may sometimes specify the projective measurements of a strategy by concisely giving observables (which encode information about the projectors in their eigenspaces).

\paragraph{Self-testing} As we will be borrowing inspiration and techniques from the field of device-independent self-testing, we provide a formal definition: 

\begin{definition}[Self-testing]
We say that a correlation $\{p^*(a,b|x,y): a \in \m{A}, b \in \m{B}\}_{x \in \m{X}, y \in \m{Y}}$ self-tests a strategy $\left(\ket{\Psi}, \{\tilde{\Pi}_{A_x}^{a}\}_{a}, \{\tilde{\Pi}_{B_y}^{b}\}_{b} \right)$ if, for any strategy $\left(\ket{\psi}, \{\Pi_{A_x}^a\}_a, \{\Pi_{B_y}^b\}_b \right)$ that \induce s $p^*$, there exists a local isometry $\Phi= \Phi_A\otimes\Phi_B$ and an auxiliary state $\ket{aux}$ such that, for all $x \in \m X,y \in \m Y,a \in \m A,b\in \m B$,
\begin{align}
\Phi(\ket{\psi}) &=  \ket{\Psi} \otimes \ket{aux} \label{eq: state}\\ \label{eq: measurements}
\Phi(\Pi_{A_x}^{a}\otimes \Pi_{B_y}^{b}\ket{\psi}) &=  \tilde{\Pi}_{A_x}^{a}\otimes \tilde{\Pi}_{B_y}^{b} \ket{\Psi} \otimes \ket{aux} 
\end{align}
\end{definition}
Sometimes, we refer to \textit{self-testing of the state} when we are only concerned with the guarantee of equation \eqref{eq: state}, and not \eqref{eq: measurements}. 
% We may also informally say that ``maximal violation of some Bell inequality self-tests a strategy", when we formally mean that any correlation achieving maximal violation self-tests the strategy.

\paragraph{Tilted CHSH}
%We introduce the tilted CHSH correlation \cite{Acin12}, which is a building block for the correlation that appears in this work.

We introduce the tilted CHSH inequality \cite{Acin12}, which is a building block for the correlation that appears in this work. First, we recall the CHSH inequality. It states that for binary observables $A_0, A_1$ on Hilbert space $\m H_A$ and binary observables $B_0, B_1$ on Hilbert space $\m H_B$ together with a product state $\ket\phi = \ket{\phi_A} \otimes \ket{\phi_B}$, we have
\begin{equation}
	\expectation \phi {A_0B_0 + A_0B_1 +A_1B_0 - A_1B_1} \leq 2,
\end{equation}
where the maximum is achieved (for example setting all observables to identity). However, if instead of the product state $\ket\phi$ we allow an entangled state $\ket\psi$, then the right-hand side of the inequality increases to $2\sqrt2$. This maximum requires a maximally entangled state to achieve. In this work, we would like to use an inequality that requires a non-maximally entangled state to achieve the maximum; this is the tilted CHSH inequality. Given a real parameter\footnote{This parameter is usually called $\alpha$ in the literature, since the Bell inequality is the fundamental object of study. For our purposes, the object of study is the state and measurements appearing in Definition \ref{def: ideal tilted chsh}, so we reserve the letter $\alpha$ for a parameter appearing there. These $\beta$ and $\alpha$ are related by an invertible function.} $\beta \in [0,2]$, for a product state $\ket{\phi} = \ket{\phi_A} \otimes \ket{\phi_B}$,
\begin{equation}
	\expectation \phi {\beta A_0 + A_0B_0 + A_0B_1 +A_1B_0 - A_1B_1} \leq 2+\beta.
\end{equation}
For entangled $\ket\psi$, we have instead that
\begin{equation}\label{eq:tiltedchsh}
	\expectation \psi {\beta A_0 + A_0B_0 + A_0B_1 +A_1B_0 - A_1B_1} \leq \sqrt{8+2\beta^2}.
\end{equation}
The maximum in the tilted CHSH inequality is attained by the following strategy:
\begin{definition}[Ideal strategy for tilted CHSH]
\label{def: ideal tilted chsh}
Given parameter $\beta$, let $\sin 2\theta = \sqrt{\frac{4-\beta^2}{4+\beta^2}}$, $\mu = \arctan \sin 2\theta$, and $\alpha = \tan \theta$. Define the \emph{$\alpha$-tilted Pauli operators} as
\begin{equation}
 	\saz := \cos \mu \sigma^z + \sin \mu \sigma^x\text{, and } \sax := \cos \mu \sigma^z - \sin \mu \sigma^x.
 \end{equation} 
The ideal strategy for tilted CHSH with parameter $\beta$ (i.e.\ achieving maximal violation of \eqref{eq:tiltedchsh}) consists of the joint state $\ket{\Psi} = \cos \theta (\ket{00} + \alpha \ket{11})$ and observables $A_0, A_1$ and $B_0, B_1$ with $A_0 = \sigma^z$, $A_1 = \sigma^x$, $B_0 = \sigma_\alpha^z$ and $B_1 = \sigma_\alpha^x$. For each observable, we associate the projection onto the $+1$-eigenspace with answer $0$ and the projection onto the $-1$-eigenspace with answer $1$.

\end{definition}

Since in the present work we are primarily concerned with the ratio of the coefficients of the ideal state, we refer to the correlation defined by the ideal strategy of Definition \ref{def: ideal tilted chsh} as the \textit{ideal tilted CHSH correlation for ratio $\alpha$}. In the remainder of the paper, we use the correlation along with the ideal strategy, but we will forget the Bell inequality \eqref{eq:tiltedchsh} that motivates them. In particular, we will use the following lemma.
% for the $\alpha$-tilted EPR pair, or CHSH$_\alpha$. Notice that we label it with the parameter $\alpha$ present in the ideal state rather than the parameter $\beta$ in the bell inequality. This will be useful to us because we'll define our correlations from the ideal measurements, rather than the reverse

%The converse also holds, in the sense that maximal violation self-tests this strategy:
% The following Lemma from \cite{Bamps15} will be useful later on.

% \begin{lem}[\cite{Bamps15}]
% \label{Bamps lemma}

% Maximal quantum violation of the tilted CHSH inequality self-tests the strategy of Definition \ref{def: ideal tilted chsh}.
% \end{lem}
\begin{lem}[\cite{Bamps15}]
\label{Bamps lemma}

The tilted CHSH correlation for ratio $\alpha$ self-tests the strategy of Definition \ref{def: ideal tilted chsh}.
\end{lem}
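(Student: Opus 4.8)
The plan is to establish that the tilted CHSH correlation for ratio $\alpha$ rigidly pins down the ideal strategy up to a local isometry, following the standard swap-isometry approach to self-testing. The key observation is that the tilted CHSH operator is a sum of products of local observables, so maximal violation of \eqref{eq:tiltedchsh} can be analyzed via a sum-of-squares (SOS) decomposition: one writes $\sqrt{8+2\beta^2}\,\id - (\beta A_0 + A_0 B_0 + A_0 B_1 + A_1 B_0 - A_1 B_1) = \sum_i M_i^\dagger M_i$ for suitable operators $M_i$ built from $A_0, A_1, B_0, B_1$. Since the correlation saturates the bound, every term must annihilate $\ket\psi$, i.e.\ $M_i\ket\psi = 0$ for all $i$. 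These relations encode, on the support of the state, the anticommutation-like structure that Alice's operators must satisfy (a rotated version of $\{\sigma^z,\sigma^x\}=0$), and fix the relation between Bob's operators $B_0,B_1$ and Alice's, together with the constraint that forces the Schmidt coefficients of $\ket\psi$ to have the prescribed ratio $\alpha$.

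First I would extract from the SOS relations that the regularized operators $Z_A := A_0$, $X_A := \tfrac{1}{\|(A_1)_{\text{reg}}\ket\psi\|}(\ldots)$ (appropriately normalized versions of $A_1$ restricted to where it acts nontrivially) behave like Pauli $\sigma^z,\sigma^x$ on $\ket\psi$: they square to $\id$ on the relevant subspace and anticommute when applied to $\ket\psi$. Standard arguments (as in \cite{Bamps15}) then let one define the swap isometry $\Phi_A$ using controlled-$Z_A$ and controlled-$X_A$ gates and a fresh qubit ancilla, and similarly $\Phi_B$ using the corresponding operators on Bob's side. One then computes $\Phi_A\otimes\Phi_B(\ket\psi)$ directly: the anticommutation and normalization relations, which hold as operator equations applied to $\ket\psi$, collapse the calculation exactly as in the maximally entangled CHSH case, except that the asymmetry introduced by the $\beta A_0$ term produces the weights $\cos\theta$ and $\cos\theta\,\alpha$ on $\ket{00}$ and $\ket{11}$ rather than equal weights. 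The same relations, applied together with \eqref{eq: state}, give \eqref{eq: measurements} by tracking how each $\Pi_{A_x}^a\otimes\Pi_{B_y}^b$ acts through the isometry.

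The main obstacle is the usual one in self-testing proofs: handling the case where Alice's operator $A_1$ (or Bob's $B_1$) does not act as a full reflection on the support of $\ket\psi$ — that is, dealing with a possible kernel of the "regularized" operators and ensuring the normalization factors are well-defined and equal to what the ideal strategy predicts. This is resolved by passing to the subspace on which the operators act nontrivially (the support of $\ket\psi$ and its images under the measurement operators), showing the SOS relations force the operators to be genuine $\pm1$ reflections there, and absorbing everything outside into the auxiliary state $\ket{aux}$. Since Lemma \ref{Bamps lemma} is quoted directly from \cite{Bamps15}, I would not reproduce the full technical argument; I would cite it, noting that the statement we need is exactly the robust (and here, exact) self-testing conclusion of that reference, specialized to the exact case where the correlation is attained perfectly, which yields the isometry identities \eqref{eq: state} and \eqref{eq: measurements} with no error terms.
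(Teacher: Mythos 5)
Your proposal takes the same approach as the paper: Lemma~\ref{Bamps lemma} is quoted directly from \cite{Bamps15} and the paper supplies no proof of its own, and you likewise end by deferring to that reference (after noting, correctly, that the exact case follows by specializing the robust statement). Your sketch of the SOS/swap-isometry argument is an accurate summary of the cited proof and correctly identifies regularization of the non-unitary operator combinations as the main technical issue, so there is no gap to flag.
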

%The version of this lemma appearing in \cite{Bamps15} is stronger than the lemma stated above in two ways: they show that the correlation is characterized by maximally violating a certain Bell inequality, and they show that the self-testing property holds under approximation. Both of these strengthenings are irrelevant for the purposes of the present work.
% This self-testing result is also true in an approximate setting, but we will not need this for our application.

\paragraph{Correlation tables}
A convenient way to describe correlations is through \textit{correlation tables}. A correlation $p$ on $\m{X}$, $\m{Y}$, $\m{A}$, $\m{B}$ is completely specified by correlation tables $T_{xy}$ for $x \in \m{X}, y \in \m{Y}$, with entries $T_{xy}(a,b) = p(a,b |x,y)$. See Table \ref{tab:txy0}.
\begin{table}[h]
\label{tab:txy0}
\caption{The correlation table on question $(x,y)$ of a correlation on answer sets $\m{A} = \m{B} = \{0,1\}$.}
\begin{center}
\begin{tabular}{| c || Sc | Sc |}
	\hline
	\diagbox[width=2.5em]{$a$}{$b$} & 0 & 1 \\ \hhline{|===|}
	0 & $ p(0,0|x,y)$ & $p(0,1|x,y)$ \\ 
	\hline
	1 & $p(1,0|x,y)$ &  $p(1,1|x,y)$ \\ 
	\hline
\end{tabular}
\end{center}
\end{table}

As mentioned earlier, we will make use of the ideal tilted CHSH correlation as a building block for our separating correlation. For $x,y \in \{0,1\}$ and $\alpha \in (0,1)$, we denote by $\mbox{CHSH}_{x,y}^{\alpha}$ the correlation table on question $x,y$ for the ideal tilted CHSH correlation for ratio $\alpha$. %attained by the state $\frac{1}{\sqrt{1+\alpha^2}}(\ket{00} + \alpha \ket{11})$

For $\omega \in [0,1]$ and a correlation table $T_{xy}$, we write $\omega \cdot T_{xy}$ to denote entry-wise multiplication of $T_{xy}$ by $\omega$. We may refer to $\omega$ as a \emph{weight}. 

\section{Direct sums of correlations}
\label{sec: direct sums}
In this section, we introduce the notion of a direct sum of correlations. We will later use this to build our desired correlation out of tilted CHSH building blocks. Lemma \ref{lem: key} will allow us to characterize the strategies for the desired correlation from self-testing results about its direct summands. In particular, these strategies also decompose, in a sense made precise below, as a direct sum of strategies corresponding to the direct summands.
% Given a correlation $p$ which is a direct sum of correlations $p_i$ (in a sense made precise below), we would like to characterize the quantum strategies which achieve $p$, given that we possess self-testing results for the $p_i$. Lemma \ref{lem: key} allows us to do this, by showing that a quantum strategy achieving $p$ must also, in a sense made precise below, decompose as a direct sum of quantum strategies achieving the $p_i$. 
The proof is somewhat technical, and the ideas in the proof are not necessary to understand the rest of the paper. Some of the ideas in the proof have appeared ad-hoc in previous works on constructing quantum correlations block-by-block \cite{CGS16}, \cite{coladangelo2018chsh}. We package these arguments into a lemma since it may be of independent interest.
First, we define formally a direct sum of correlations.
\begin{definition}[Direct sum of correlations]
Let $p$ be a correlation on $\m{X}, \m{Y}, \m{A}$, $\m{B}$. Suppose for some positive integer $l$, for $i \in [l]$, there exist partitions
$\m{A} = \bigsqcup_{i=1}^{l}\m{A}_i$, $\m{B} = \bigsqcup_{i=1}^{l}\m{B}_i$, real numbers $\omega_i \geq 0$ with $\sum_{i=1}^l \omega_i =1$, and correlations $p_i$ on $\m{X},\m{Y}, \m{A}_i, \m{B}_i$ such that for all $i,j \in [l],$ 
$a\in \m A_i, b \in \m B_j, x\in \m X, y\in \m Y$,
\begin{equation}\label{eq:direct-sum-condition}
	p(a,b|x,y) = \delta_{ij} \omega_i p_i(a,b|x,y).
\end{equation}
Then we say that $p$ is a direct sum of the $p_i$, and we write $p = \oplus_{i=1}^l \omega_i p_i$. We sometimes refer to the $p_i$ as \emph{blocks} of $p$ and the $\omega_i$ as \emph{weights} of the blocks. We give a visual interpretation of condition \eqref{eq:direct-sum-condition} in Table \ref{table:direct-sum}.
\end{definition}
% In other words, letting $T_{xy}^{(i)}$ be the correlation table for $p_i$ on question $(x,y)$, then the correlation table for $p$ on question $(x,y)$ has a block form given in 

\begin{table}[H]
\caption{The correlation table for $p = \oplus_i \omega_ip_i$ on questions $x,y$. $T_{xy}^{(i)}$ is the correlation table for correlation $p_i$ on questions $x,y$.}
\label{table:direct-sum}
\begin{center}
\begin{tabular}{| c || Sc | Sc | Sc | Sc | Sc | Sc | Sc |}
	\hline
	\diagbox[width=2.5em]{$a$}{$b$} & \multicolumn{2}{c|}{$\m B_1$}  & $\cdots$ & \multicolumn{2}{c|}{$\m B_l$}		\\ \hhline{|======|} %\hline
	% \multicolumn{7}{|c|}{}
	% \hline
	\multirow{2}{*}{$\m A_1$} & \multicolumn{2}{c|}{\multirow{2}{*}{$\omega_1\cdot T_{xy}^{(1)}$}}& \multirow{2}{*}{0} &\multicolumn{2}{c|}{\multirow{2}{*}{0}}		\\ 
	 & \multicolumn{2}{c|}{} 								&  	&	\multicolumn{2}{c|}{}	\\ \hline
	\multirow{2}{*}{$\vdots$} & \multicolumn{2}{c|}{\multirow{2}{*}{0}}  & \multirow{2}{*}{$\ddots$}	& \multicolumn{2}{c|}{\multirow{2}{*}{0}}\\ 
	&\multicolumn{2}{c|}{}&&\multicolumn{2}{c|}{}\\ 
	\hline
	\multirow{2}{*}{$\m A_l$} & \multicolumn{2}{c|}{\multirow{2}{*}{0}} & \multirow{2}{*}{0}& \multicolumn{2}{c|}{\multirow{2}{*}{$\omega_l\cdot T_{xy}^{(l)}$}} \\ 
	 & \multicolumn{2}{c|}{} &  &   \multicolumn{2}{c|}{} \\ \hline
\end{tabular}
\end{center}
\end{table}

\begin{lem}
\label{lem: key}
Let $p \in \m{C}_{qs}^{m,n,d,d}$ be a correlation on $\m{X},\m{Y}, \m{A}, \m{B}$, \induce d by a strategy \mbox{$(\ket{\psi} \in \m{H}_A \otimes \m{H}_B, \{\Pi_{A_x}^a\}_a, \{\Pi_{B_y}^b\}_b)$}. Suppose for some positive integer $l$, there exist partitions %$\m{A}_i \subseteq \m{A}$, $\m{B}_i \subseteq \m{B}$,
$\m{A} = \bigsqcup_{i=1}^{l}\m{A}_i$, $\m{B} = \bigsqcup_{i=1}^{l}\m{B}_i$, with $|\m{A}_i |= |\m{B}_i| = d_i$, and correlations $p_i \in \m{C}_{qs}^{m,n,d_i,d_i}$ on $\m{X},\m{Y}, \m{A}_i, \m{B}_i$ such that $p = \bigoplus_{i=1}^l \omega_i p_i$. Then there exist direct sum decompositions $\m H_A = \m H_A^\text{null} \oplus \bigoplus_i\m H_A^i, \m H_B = \m H_B^\text{null} \oplus \bigoplus_i\m H_B^i$ and strategies
\begin{equation}\label{eq: lem-key-strategy}
	\left(\frac{\ket{\psi_i}}{\norm{\ket{\psi_i}}} \in \m H_A^i\otimes \m H_B^i, \{\Pi_{A_x}^a|_{\m H_A^i}\}_{a\in \m{A}_i}, \{\Pi_{B_y}^b|_{\m H_B^i}\}_{b\in \m{B}_i}\right)
\end{equation}
 such that:
\begin{enumerate}[(i)]
    \item \label{item:lemma-key blocks-are-independent-of-questions}
    	Strategy \eqref{eq: lem-key-strategy} is well-defined, i.e.\ the restricted operators $\Pi^{a}_{\m A_x}|_{\m H_A^i}$ and $\Pi^{b}_{\m A_y}|_{\m H_B^i}$ are projections.
    % \item \label{item:lemma-key state-is-in-blocks}  
    % 	$\m H_A = \m H_A^{\text{null}} \oplus \bigoplus_i \m H_A^i$, 
    % 	$\m H_B = \m H_B^{\text{null}} \oplus \bigoplus_i \m H_B^i$,
    % 	and 
    % 	$\ket\psi = \bigoplus_i \ket{\psi_i}$. 
	    % $\ket{\psi} = \sum_{i=0}^l \ket{\psi_i}$ and $\braket{\psi_i| \psi_j} = 0$ when $i\neq j$,
    \item \label{item:lemma-key block-weights-are-right}
	    $\|\ket{\psi_i}\|^2 = \omega_i$.
    	% The $\ket{\psi_i}$ are fully supported on the subspaces defined from Alice's measurements and on the subspace defined by Bob's measurements, i.e.\ for all $x\in \m X$ and $y\in \m Y$,
    	% \begin{equation}
    	% 	\sum_{a \in \m{A}_i} \Pi_{A_x}^a \otimes I  \ket{\psi_i} = \ket{\psi_i} 
    	% 	\text{, and }
    	% 	\sum_{a \in \m{B}_i} I \otimes \Pi_{B_y}^b  \ket{\psi_i} = \ket{\psi_i},
    	% \end{equation}
    
    %$\forall i \neq j, \forall a \in \m{A}_i, b \in \m{B}_i$, $\,\, \Pi_{A_x}^a \otimes I \ket{\psi_j} = I \otimes \Pi_{B_y}^b \ket{\psi_j}= 0$
    % \item \label{item:lemma-key strategy-compatible-with-blocks}
	   %  $\forall x \in \m{X}, y \in \m{Y}$,
    % \begin{align*}
    %  \hat{\Pi}_{A_x}^a \otimes I \ket{\psi_i} &= \Pi_{A_x}^a\otimes I \ket{\psi_i} \\
    %  \hat{\Pi}_{B_y}^b \otimes I \ket{\psi_i} &= \Pi_{B_y}^b\otimes I \ket{\psi_i}
    % \end{align*}
    \item \label{item:lemma-key strategy-works}
	    $p_i$ is \induce d by strategy \eqref{eq: lem-key-strategy}.
	\item
	\label{item:lemma-key 5}
	    For all $ x \in \m{X}, y \in \m{Y}, a \in \m{A}_i, b \in \m{B}_i$: $$\Pi_{A_x}^a|_{\m H_A^i} \ket{\psi_i} = \Pi_{A_x}^a \ket\psi, \,\,\,\, \Pi_{B_y}^B|_{\m H_B^i} \ket{\psi_i} = \Pi_{B_y}^b \ket\psi $$
\end{enumerate}
\end{lem}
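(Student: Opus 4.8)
\textbf{Proof plan for Lemma \ref{lem: key}.}

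The plan is to build the decomposition of $\m H_A$ and $\m H_B$ out of the supports of the measurement operators, and then verify that the direct-sum structure of the correlation forces the claimed properties. First I would introduce the ``coarse-grained'' projections $P_{A_x}^i := \sum_{a \in \m{A}_i} \Pi_{A_x}^a$ on $\m H_A$ and $Q_{B_y}^i := \sum_{b \in \m{B}_i} \Pi_{B_y}^b$ on $\m H_B$. For each fixed $x$ (resp.\ $y$), $\set{P_{A_x}^i}_{i\in[l]}$ is a complete set of orthogonal projections, so it partitions $\m H_A$ into blocks; the key point to establish is that this partition \emph{does not depend on $x$}. To see this, observe that $p = \bigoplus_i \omega_i p_i$ means the ``cross terms'' vanish: for $i\neq j$, $a\in\m A_i$, $b\in \m B_j$, we have $\expectation{\psi}{\Pi_{A_x}^a\otimes \Pi_{B_y}^b} = 0$, and summing gives $\expectation{\psi}{P_{A_x}^i \otimes Q_{B_y}^j} = 0$. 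I would then use the standard trick (appearing in \cite{CGS16, coladangelo2018chsh}) that $\expectation{\psi}{M\otimes N} = 0$ for positive $M,N$ implies $(M\otimes \Id)\ket\psi$ and $(\Id\otimes N)\ket\psi$ are ``supported on orthogonal parts'', more precisely $(M\otimes N)\ket\psi = 0$ and also $(\sqrt M\otimes \Id)\ket\psi \perp (\sqrt N \otimes \Id)\ket\psi$-type statements. Concretely, from $\expectation{\psi}{P_{A_x}^i\otimes Q_{B_y}^j}=0$ for $i\ne j$ and completeness $\sum_j Q_{B_y}^j = \Id$, one gets $(P_{A_x}^i\otimes\Id)\ket\psi = (P_{A_x}^i \otimes Q_{B_y}^i)\ket\psi$. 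Since the right-hand side does not involve $x$ except through $P_{A_x}^i$, comparing two questions $x, x'$ and using that the $Q_{B_y}^i$ sum to identity should show $(P_{A_x}^i\otimes \Id)\ket\psi = (P_{A_{x'}}^i \otimes \Id)\ket\psi$, i.e.\ the vector $(P_{A_x}^i\otimes\Id)\ket\psi =: \ket{\psi_i}$ is independent of $x$; symmetrically it is independent of $y$ when defined via $Q_{B_y}^i$, and the two agree.

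Next I would \emph{define} the subspaces: let $\m H_A^i$ be the closed span of $\set{(\Pi_{A_x}^a \otimes \Id)\ket{\psi} : x\in\m X, a\in \m A_i}$ reduced to the $A$ side (formally, take the support of the reduced density matrix of $\ket{\psi_i}$ on $\m H_A$, or equivalently $P_{A_{x_0}}^i$ applied to the support of $\rho_A$ for any fixed $x_0$, having shown this is $x_0$-independent), and symmetrically $\m H_B^i$; set $\m H_A^\text{null}$ to be the orthogonal complement of $\bigoplus_i \m H_A^i$, which absorbs the kernel of $\rho_A$. I must check these are genuinely orthogonal for distinct $i$, which follows from $P_{A_x}^i P_{A_x}^j = \delta_{ij}P_{A_x}^i$ together with the $x$-independence just established. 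For (i), I need each $\Pi_{A_x}^a$ with $a\in\m A_i$ to leave $\m H_A^i$ invariant and restrict to a projection there; invariance follows because $\Pi_{A_x}^a = \Pi_{A_x}^a P_{A_x}^i$ annihilates the other blocks' defining vectors and preserves $P_{A_x}^i$'s range on the support, and the restriction is a projection since $\Pi_{A_x}^a$ is. For (ii), $\norm{\ket{\psi_i}}^2 = \expectation{\psi}{P_{A_x}^i\otimes \Id} = \sum_{a\in\m A_i}\sum_{b\in\m B}\expectation{\psi}{\Pi_{A_x}^a\otimes\Pi_{B_y}^b} = \sum_{a\in\m A_i, b\in\m B_i}\omega_i p_i(a,b|x,y) = \omega_i$, using the direct-sum condition and that $p_i$ is a normalized correlation. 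For (iii), I would compute $\expectation{\psi_i/\norm{\psi_i}}{\Pi_{A_x}^a|_{\m H_A^i}\otimes \Pi_{B_y}^b|_{\m H_B^i}}$ and show it equals $p_i(a,b|x,y)$: expanding, $\expectation{\psi_i}{\Pi_{A_x}^a\otimes\Pi_{B_y}^b}$ over $\norm{\psi_i}^2$; since $\ket{\psi_i} = (P_{A_x}^i\otimes P_{B_y}^i)\ket\psi$ on the relevant support and $\Pi_{A_x}^a \le P_{A_x}^i$, $\Pi_{B_y}^b\le P_{B_y}^i$ for $a\in\m A_i,b\in\m B_i$, this reduces to $\expectation{\psi}{\Pi_{A_x}^a\otimes\Pi_{B_y}^b}/\omega_i = \omega_i p_i/\omega_i = p_i$. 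Property (iv) is essentially a byproduct: $\Pi_{A_x}^a|_{\m H_A^i}\ket{\psi_i} = \Pi_{A_x}^a (P_{A_{x'}}^i\otimes\Id)\ket\psi$, and choosing $x' = x$ and using $\Pi_{A_x}^a P_{A_x}^i = \Pi_{A_x}^a$ gives $\Pi_{A_x}^a\ket\psi$.

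I expect the \textbf{main obstacle} to be the careful handling of the infinite-dimensional setting and the $x$-independence claim: showing rigorously that $(P_{A_x}^i\otimes\Id)\ket\psi$ is independent of $x$ requires chaining the orthogonality relations $\expectation{\psi}{P_{A_x}^i\otimes Q_{B_y}^j}=0$ for all $i\neq j$ correctly, and being careful that ``restriction to an invariant subspace'' behaves well — in particular that $\m H_A^i$ is invariant under \emph{all} $\Pi_{A_x}^a$ for $a\in\m A_i$ and \emph{all} $x$, not just one question. A clean way to organize this is to first prove the vector identity $(\Pi_{A_x}^a\otimes \Id)\ket\psi = (\Pi_{A_x}^a \otimes P_{B_y}^i)\ket\psi$ for $a\in\m A_i$ (any $y$), and its mirror, and then derive everything — $x$-independence of the block subspaces, invariance, and properties (i)--(iv) — from these two identities plus the numerical direct-sum condition. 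The rest is bookkeeping with orthogonal projections.
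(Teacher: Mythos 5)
Your proposal follows essentially the same approach as the paper: define $\ket{\psi_i}$ via the coarse-grained projectors $\Pi_{A_x}^{\m A_i} = \sum_{a\in\m A_i}\Pi_{A_x}^a$, derive $x$- and $y$-independence from the vanishing cross-terms via the mirror identity $\Pi_{A_x}^{\m A_i}\ket\psi = \Pi_{B_y}^{\m B_i}\ket\psi$, take $\m H_A^i$ to be the support of $\rho_A^i = \tr_B\proj{\psi_i}$, and verify (i)--(iv) by direct computation. Your intermediate phrasing of the $x$-independence step (``comparing two questions $x,x'$'') does not stand on its own, but the cleanup you sketch in the final paragraph --- proving the block identity together with its mirror and deducing everything from those --- is precisely the paper's argument.
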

% For $P$ an operator on space $\m H$ and $\m W \seq \m W$ a subspace, we write $P|_{\m W}$ to refer to the restriction of $P$ onto $\m W$. If $P$ is a projection, then $P|_{\m W}$ may fail to be a projection. In fact, $P|_{\m W}$ is a projection iff $P$ fixes $\m W$ as a subspace. Thus condition \eqref{item:lemma-key blocks-are-independent-of-questions} ensures that Equation \eqref{eq: lem-key-strategy} defines a valid strategy.

\begin{proof}
% For a subspace $\m{W}$, let $\id_{\m{W}}$ be the projection onto $\m{W}$. 
For the remainder of the proof, when an operator acts only on one tensor factor we omit writing the identity on the other factors. 

%\paragraph{Defining the subspaces $\m H_A^i, \m H_B^i$}
Our first goal is to construct the subspaces $\m H_{A}^i, \m H_{B}^i$. We first study the action of the projectors corresponding to answers in $\m A_i$ and $\m B_i$ on the state $\ket{\psi}$. We will use these properties to define the states $\ket{\psi_i}$. Then from these, we will construct $\m H_A^i$ and $\m H_B^i$.

For $x\in \m{X}, y \in \m Y$, define $\Pi_{A_x}^{\m{A}_i} := \sum_{a \in \m{A}_i} \Pi_{A_x}^{a}$ and $\Pi_{B_y}^{\m{B}_i} := \sum_{b \in \m{B}_i} \Pi_{B_y}^{b}$  
% and $\Pi_{B_y}^{\m{B}_i} := \sum_{b \in \m{B}_i} \Pi_{B_y}^{b}$
. 
% In other words, $\id_{\m U_i}$ is the smallest projector such that $\id_{\m U_i}\Pi_{A_x}^{\m{A}_i}\id_{\m U_i} = \Pi_{A_x}^{\m{A}_i}$ for all $x$, and similarly for $\m V_i$. 
% Notice that $\id_{\m U_i}$ and $\id_{\m V_i}$ commute when considered as operators on $\m H_A\otimes \m H_B$. 
We will show that $\Pi_{A_x}^{\m{A}_i} \ket\psi = \Pi_{B_y}^{\m{B}_i} \ket\psi$ for all $i,x,y$.     
% We will show that for all $i\in [l], x \in \m{X}, y \in \m{Y},$
% \begin{equation}
%     \id_{\m{U}_{i}} \ket{\psi} = \Pi_{A_x}^{\m{A}_i}\ket{\psi} = \Pi_{B_y}^{\m{B}_i} \ket{\psi} = \id_{\m{V}_i} \ket{\psi} = \ket{\psi_i}.
% \end{equation}   
For any $i \in [l], x \in \mathcal{X}, y \in \mathcal{Y}$,
% Given any $i, x, y$, we compute
\begin{align}
\Pi_{A_x}^{\m{A}_i} \ket{\psi} 
	&= \left(\sum_{a \in \m{A}_i} \Pi_{A_x}^{a}\right) \otimes I \ket{\psi} \\
	\label{eq: 14-b}
	&= \left(\sum_{a \in \m{A}_i} \Pi_{A_x}^{a}\right) \otimes \left(\sum_{b \in \m{B}} \Pi_{B_y}^{b}\right) \ket{\psi} \\
	&= \left(\sum_{a \in \m{A}_i} \Pi_{A_x}^{a}\right) \otimes \left(\sum_{b \in \m{B}_i} \Pi_{B_y}^{b}\right) \ket{\psi} \\
	&= \Pi_{A_x}^{\m{A}_i} \otimes \Pi_{B_y}^{\m{B}_i} \ket{\psi}. \label{eq: 14}
\end{align}
The second equality follows from the fact that $\set{\Pi_{B_y}^{b}}$ forms a complete measurement. The third equality comes from the block structure of the correlation. More specifically, suppose that $a\in \m A_i$ but $b\nin \m B_i$. Then the block structure demands that $p(a,b| x,y) = 0$ for all $x,y$. So we conclude that
% where the third line follows from the fact that the correlation $p$ has the form $p = \oplus_{i=1}^l \omega_i p_i$.
	$\norm{\Pi_{A_x}^{a} \otimes \Pi_{B_y}^{b} \ket \psi}^2= p(a,b|x,y)= 0.$
This forces the appropriate terms of the sum in Equation \eqref{eq: 14-b} to vanish.
 The same argument with the roles of $\m A$ and $\m B$ reversed gives 
 \begin{equation}\label{eq:15}
 	\Pi_{B_y}^{\m{B}_i} \ket{\psi} = \Pi_{A_x}^{\m{A}_i} \otimes \Pi_{B_y}^{\m{B}_i} \ket{\psi}.
 \end{equation}
 Combined with Equation \eqref{eq: 14}, this implies that, for any $i,x,y$, \begin{equation}
 \label{eq: action on state}
  \Pi_{A_x}^{\m{A}_i} \ket{\psi}= \Pi_{B_y}^{\m{B}_i} \ket{\psi}   
 \end{equation} 
 In particular, the action of $\Pi_{A_x}^{\m{A}_i}$ on $\ket\psi$ is the same for all $x$, and similarly for the $\m B$ operators. This lets us define $$\ket{\psi_i} := \Pi_{A_x}^{\m A_i} \ps,$$ where the choice of $x$ does not matter.
%  More precisely,
% \begin{align}
%     \Pi_{A_x}^{\m{A}_i} \ket{\psi} &= \Pi_{A_{x'}}^{\m{A}_i} \ket{\psi} \text{ for all }x,x' \in \m{X} \text{, and}\\ 
%     \Pi_{B_y}^{\m{B}_i} \ket{\psi} &= \Pi_{B_{y'}}^{\m{B}_i} \ket{\psi} \text{ for all }y,y' \in \m{Y}.
% \end{align}
%This implies that 
%\begin{equation} %\label{eq:16}
	%\id_{\m{U}_i}\ket\psi = \Pi_{A_x}^{\m{A}_i} \ket{\psi}\text{ and }\id_{\m{V}_i}\ket\psi = \Pi_{B_y}^{\m{B}_i} \ket{\psi} \text{ for all } i, x, y.
%\end{equation}

% Combining Equations \eqref{eq: 14}, \eqref{eq:15}, \eqref{eq:16} gives that 
% \begin{equation}
% 	\id_{\m U_i} \ket\psi = \id_{\m V_i}\ket\psi.
% \end{equation}
%Altogether, we have deduced that for all 

Now we compute the norm of $\ket{\psi_i}$. The block structure $p = \oplus_i \omega_i p_i$ of the correlation gives us that for any fixed $x$ and $y$,
\begin{align*}
	\omega_i 
	&= \sum_{a\in \m A_i, b \in \m B_i} p(a,b|x,y)
	\\&= \sum_{a\in \m A_i, b \in \m B_i} \expectation \psi {\Pi_{A_x}^{a} \otimes \Pi_{B_y}^{b}}
	\\&= \expectation \psi {\Pi_{A_x}^{\m{A}_i}\otimes \Pi_{B_y}^{\m{B}_i}}
	\\&= \norm{\ket{\psi_i}}^2.
\end{align*}
where the last line follows from Equation \eqref{eq: action on state}. 
This establishes condition \eqref{item:lemma-key block-weights-are-right}. 
% \eqref{eq: 15} and the fact that $p = \oplus_{i=1}^l \omega_i p_i$ imply straightforwardly that $\|\ket{\psi_i}\|^2 = \bra{\psi} \Pi_{A_x}^{\m{A}_i} \ket{\psi} = \omega_i$ (this is condition $(ii)$). Conditions $(i)$ and $(iii)$ follow immediately from \eqref{eq: 15}.
Now let $\rho_{A}^i = \tr_B \proj{\psi_i} = \sum_j \lambda_j \proj j$, where $\lambda_j$ are the eigenvalues and $\ket{j}$ the eigenvectors of $\rho_{A}^i$. These are guaranteed to exist even if $\ket{\psi_i}$ is infinite-dimensional, because the existence of a Schmidt decomposition for any bipartite state holds also in infinite-dimensional Hilbert spaces. Notice that 
\begin{equation}
	\sum_j \lambda_j = \tr\rho_A^i = \norm{\ket{\psi_i}}^2 = \omega_i.
\end{equation}
We wish to compute the action of $\Pi_{A_x}^{\m A_i}$ on the eigenstates of $\rho_{A}^i$. We calculate
\begin{align*}
	\omega_i
	&= \expectation{\psi}{\Pi_{A_x}^{\m A_i} \otimes I} 
	\\&= \tr \Pi_{A_x}^{\m A_i}\rho_A^i
	\\&= \sum_j \lambda_j \tr \Pi_{A_x}^{\m A_i}\proj j
	\\&= \sum_j \lambda_j \norm{\Pi_{A_x}^{\m A_i}\ket j}^2.
\end{align*}
Since $\omega_i = \sum_j \lambda_j$, we must have $\norm{\Pi_{A_x}^{\m A_i}\ket j}^2 = 1$ for each $j$. In other words, $\Pi_{A_x}^{\m A_i}\ket j = \ket j$. 
% Similarly, define $\m H_B^i$ as the span of the nontrivial eigenvectors of $\rho_B^i = \tr_A \proj{\psi_i}$. 
% Notice that $\id_{\m H_A^i} \ket{\psi_i} = \ket{\psi_i}$. Furthermore, notice that $\id_{U_i}\Pi_{A_x}^{\m A_i}\id_{U_i} = \id_{U_i}$. Thus the $U_i$ are suitable spaces for the new strategies to be defined on. 
This motivates us to define the space $\m H_A^i$ as the span of the nontrivial eigenvectors of $\rho_A^i$. Define also $P_i$ as the projection onto subspace $\m H_A^i$.

It follows from the definition of the $\ket{\psi_i}$ and the $\m H_A^i$ that
\begin{equation}
\label{eq: 18}
P_i \ket{\psi_j} = \delta_{ij}\ket{\psi_i}.
\end{equation}
Furthermore, notice that $\Pi_{A_x}^{\m A_i}P_i = P_i$. 
Thus the $\m H_A^i$ are suitable spaces for the new strategies to be defined on. In particular, the restricted operators $\Pi_{A_x}^{a}|_{\m H_A^i}$ are projectors. To see this, notice that they are orthogonal for distinct $a$ and that they sum to identity.

Let $\m H_A^\text{null}$ be the orthogonal complement of $\bigoplus_i \m H_A^i$ in $\m H_A$. 
Define $\m H_B^i$ and $\m H_B^\text{null}$ analogously. Clearly, $\bigoplus_i \m H_A^i$ and $\bigoplus_i \m H_B^i$ are topologically closed. This implies that $\m H_A = \m H_A^{\text{null}} \oplus \bigoplus_i \m H_A^i$ and $\m H_B = \m H_B^{\text{null}} \oplus \bigoplus_i \m H_B^i$.

Thus, we have established condition \eqref{item:lemma-key blocks-are-independent-of-questions} of the lemma.

It follows straightforwardly from the Definition of $\ket{\psi_i}$ and \eqref{eq: 18} that for $a\in \m A_i$, $\Pi_{A_x}^a|_{\m H_A^i} \ket{\psi_i} = \Pi_{A_x}^a \ket\psi$, and similarly for $\m B$. This establishes condition \eqref{item:lemma-key 5}. Finally, we show 
condition \eqref{item:lemma-key strategy-works}, that the strategies in each block \induce\ the appropriate correlations. We fix arbitrary $a \in \m A_i, b\in \m B_i, x \in \m X, y\in \m Y$, and calculate
\begin{align*}
    \frac{1}{\norm{\ket{\psi_i}}^2}
    \left\langle\psi_i\middle |{
    \Pi_{A_x}^a|_{\m H_A^i} \otimes \Pi_{B_y}^b|_{\m H_B^i}
    }\middle | \psi_i\right\rangle
    &=
    \frac1{\omega_i}
    \expectation{\psi}{
    \Pi_{A_x}^a \otimes \Pi_{B_y}^b
    }
    \\&= \frac1{\omega_i} p(a,b|x,y)
   \\& =p_i(a,b|x,y).
\end{align*}
In the above, the first quantity is the correlation \induce d by the strategy defined in Equation \eqref{eq: lem-key-strategy}, and the last quantity is the desired correlation $p_i$. Thus, we have shown condition \eqref{item:lemma-key strategy-works}.

\end{proof}

\section{The separating correlation}

\label{sec: separating correlation}

In this section, we describe the correlation $p^*$ that separates $\m{C}_q$ and $\m{C}_{qs}$. The correlation is on question sets $\m{X} = \{0,1,2,3\}$ and $\m{Y} = \{0,1,2,3,4\}$ and answer sets $\m{A} = \m{B} = \{0,1,2\}$. Hence, the smallest classes we separate are $\m{C}_{q}^{4,5,3,3}$ and $\m{C}_{qs}^{4,5,3,3}$. We define $p^*$ by describing the ideal infinite-dimensional strategy that \induce s it. In the following section, we will prove that no finite-dimensional strategy \induce s $p^*$.

Recall the definition of $\C^{\N}$ from section \ref{sec: preliminaries}. For each $m \geq 0$, we define two isometries $\veven m, \vodd m: \C^2 \to \C^\N$ as follows:
\begin{equation}
	\veven m\ket 0 = \ket {2m},
	\veven m\ket 1 = \ket {2m+1}
	\text{, and }
	\vodd m\ket 0 = \ket {2m+1},
	\vodd m\ket 1 = \ket {2m+2}.
\end{equation}

We use these isometries to define observables on $\C^\N$. By abuse of notation, for an isometry $V: \C^2 \rightarrow C^{\N}$ and an operator $O$ on $\C^2$, we write $V(O)$ to refer to the pushforward $VO V^\dagger$ of $O$ along $V$. For example, $\veven m (\sigma^z) = \ket{2m}\bra{2m} - \ket{2m+1}\bra{2m+1}$. For $O$ an operator with $+1,0,-1$ eigenvalues, we write $O^+$ for the projection onto the $+1$ eigenspace and $O^-$ for the projection onto the $-1$ eigenspace. One can check that with this notation $O = O^+ - O^-$.
We use the notation $\bigoplus A_i$ to denote the direct sum of observables $A_i$. We will make use of the $\alpha$-tilted Paulis $\saz,\sax$ from Definition \ref{def: ideal tilted chsh}. The following is the ideal strategy in detail. 
% Below the definition, we provide some intuition about it.

\begin{definition}[Ideal state and measurements for $p^* \in \m{C}_{qs}^{4,5,3,3}$]
\label{def: ideal correlation}
Fix $\alpha \in (0,1)$. The correlation $p^* \in \m{C}_{qs}^{4,5,3,3}$ is specified by the quantum strategy $(\ket{\Psi} \in \C^{\N} \otimes \C^{\N}, \{\Pi_{A_x}^a\}_a, \{\Pi_{B_y}^b\}_b\})$, where
% \begin{itemize}
    % \item 
   $\ket{\Psi} = \sqrt{1-\alpha^2} \sum_{i=0}^{\infty} \alpha^{i} \ket{ii}$, and the ideal measurements are described in Tables \ref{tab:alice-ideal} and \ref{tab:bob-ideal}.

\begin{table}[H]
    \caption{Alice's ideal measurements. The entry in cell $x,a$ is the projector $\Pi_{A_x}^a$.}
    \label{tab:alice-ideal}
    \centering
    \begin{tabular}{| c || Sc | Sc | Sc | Sc | Sc |}
	\hline
	\diagbox[width=2.5em]{$x$}{$a$} & 0 & 1 & 2 \\ \hhline{|====|}
	0 & $[\bigoplus_{m=0}^{\infty} \veven m(\sigma^z)]^+$ & $[\bigoplus_{m=0}^{\infty} \veven m (\sigma^z)]^-$ & 0 \\ \hline
	1 & $[\bigoplus_{m=0}^{\infty} \veven m(\sigma^x)]^+$ & $[\bigoplus_{m=0}^{\infty} \veven m (\sigma^x)]^-$ & 0 \\ \hline
	2 & $[\bigoplus_{m=0}^{\infty} \vodd m(\sigma^z)]^-$ & $[\bigoplus_{m=0}^{\infty} \vodd m (\sigma^z)]^+$ & $\proj 0$ \\ \hline
	3 & $[\bigoplus_{m=0}^{\infty} \vodd m(\sigma^x)]^-$ & $[\bigoplus_{m=0}^{\infty} \vodd m (\sigma^x)]^+$ & $\proj 0$ \\ 
	\hline
\end{tabular}
\end{table}

\begin{table}[H]
    \centering
    \caption{Bob's ideal measurements. The entry in cell $y,b$ is the projector $\Pi_{B_y}^b$.}
    \label{tab:bob-ideal}
    \begin{tabular}{| c || Sc | Sc | Sc | Sc | Sc |}
	\hline
	\diagbox[width=2.5em]{$y$}{$b$} & 0 & 1 & 2 \\ \hhline{|====|}
	0 & $[\bigoplus_{m=0}^{\infty} \veven m(\saz)]^+$ & $[\bigoplus_{m=0}^{\infty} \veven m (\saz)]^-$ & 0 \\ \hline
	1 & $[\bigoplus_{m=0}^{\infty} \veven m(\sax)]^+$ & $[\bigoplus_{m=0}^{\infty} \veven m (\sax)]^-$ & 0 \\ \hline
	2 & $[\bigoplus_{m=0}^{\infty} \vodd m(\saz)]^-$ & $[\bigoplus_{m=0}^{\infty} \vodd m (\saz)]^+$ & $\proj 0$ \\ \hline
	3 & $[\bigoplus_{m=0}^{\infty} \vodd m(\sax)]^-$ & $[\bigoplus_{m=0}^{\infty} \vodd m (\sax)]^+$ & $\proj 0$ \\ 
	\hline
	4 & $[\bigoplus_{m=0}^{\infty} \veven m(\sigma^z)]^+$ & $[\bigoplus_{m=0}^{\infty} \veven m (\sigma^z)]^-$ & 0 \\ \hline
\end{tabular}
\end{table}

\end{definition}

Intuitively, for questions $x,y \in \{0,1\}$, Alice and Bob decompose the space into a direct sum of $2\times 2$ blocks and perform the ideal tilted CHSH measurements for ration $\alpha$ on each block. For $x, y \in \{2,3\}$, they do the same, but with a block structure which is shifted forward by one standard basis element. Additionally, Bob has a fifth question on which he performs the same measurement as Alice performs on question $x=0$. 

% \noindent [Need paragraph about explaining intuition behind the correlations.]

The ideal state and measurements defining $p^*$ specify correlation tables $T_{xy}$ for all pairs of questions $x \in \{0,1,2,3\}$, $y \in \{0,1,2,3,4\}$. We explicitly report some of them, as we will later make use of the relations that these impose on the measurement projectors. For ease of notation let $C = \frac{1}{1-\alpha^2}$ in the tables below (note $C > 1$).
\begin{table}[H]
\caption{On the left, $T_{xy}$ for $x,y \in \{0,1\}$. The top-left $2 \times 2$ block contains ideal tilted CHSH correlations for questions $x,y$.}
\label{tab:txy1}
\begin{center}
\hfill
\begin{tabular}{| c || >{\centering}p{1cm} | >{\centering}p{1cm} | Sc | Sc | Sc |}
	\hline
	\diagbox[width=2.5em]{$a$}{$b$} & 0 & 1 & 2 \\ \hhline{|====|}
	0 & \multicolumn{2}{c|}{\multirow{2}{*}{CHSH$_{x,y}^{\alpha}$}}& 0 \\ 
	\cline{0-0}\cline{4-4}
	1 & \multicolumn{2}{c|}{} 								& 0  \\ \hline
	2 & 0 & 0 & 0\\
	\hline
\end{tabular}
\hfill
\begin{tabular}{| c || >{\centering}p{1cm} | >{\centering}p{1cm} | Sc | Sc | m{1em} |}
	\hline
	\diagbox[width=2.5em]{$a$}{$b$} & 1 & 0 & 2 \\ \hhline{|====|}
	1 & \multicolumn{2}{c|}{\multirow{2}{*}{$\frac{C-1}{C}\cdot\text{CHSH}_{\bar{x},\bar{y}}^{\alpha}$}}& 0 \\ 
	\cline{0-0}\cline{4-4}
	0 & \multicolumn{2}{c|}{} 								& 0  \\ \hline
	2 & 0 & 0 & $\frac1C$\\
	\hline
\end{tabular}
\hfill
\caption{On the right, $T_{xy}$ for $x,y \in \{2,3\}$. Let $\bar{x},\bar{y}$ be $x$, $y$ modulo $2$. The top-left $2 \times 2$ block contains the ideal tilted CHSH correlation table for questions $\bar{x},\bar{y}$, weighted by $\frac{C-1}{C}$ (notice that we have flipped the $0$ and $1$ labels in the rows and columns.)}
\label{tab:txy2}
\end{center}
\end{table}

\begin{table}[H]
\caption{On the left, $T_{xy}$ for $x = 0, y =4$}
\label{tab:txy3}
\begin{center}
\hfill
\begin{tabular}{| c || Sc | Sc | Sc | Sc | Sc |}
	\hline
	\diagbox[width=2.5em]{$a$}{$b$} & 0 & 1 & 2 \\ \hhline{|====|}
	0 & $ \frac{1}{C} \cdot \frac{1}{1-\alpha^4}$ & 0 & 0 \\ 
	\hline
	1 & 0 & $\frac{1}{C} \cdot\frac{\alpha^2}{1-\alpha^4}$ & 0  \\ \hline
	2 & 0 & 0 & 0\\
	\hline
\end{tabular}
\hfill
\begin{tabular}{| c || Sc | Sc | Sc | Sc | Sc |}
	\hline
	\diagbox[width=2.5em]{$a$}{$b$} & 0 & 1 & 2 \\ \hhline{|====|}
	0 & $ \frac{1}{C} \cdot (\frac{1}{1-\alpha^4}-1 )$ & 0 & 0 \\ 
	\hline
	1 & 0 & $\frac{1}{C} \cdot\frac{\alpha^2}{1-\alpha^4}$ & 0  \\ \hline
	2 & $\frac{1}{C}$ & 0 & 0\\
	\hline
\end{tabular}
\hfill
\caption{On the right, $T_{xy}$ for $x =2 ,y =4$}
\label{tab:txy4}
\end{center}
\end{table}

\section{Proof of separation}
\label{sec: proof}
In this section, we prove Theorem \ref{thm: main}. We start from a strategy that \induce s $p^*$: in \S \ref{sec: proj}, we prove properties of the state and the measurement operators, and in \S \ref{sec: schmidt}, we characterize the non-zero Schmidt coefficients, concluding that there must be infinitely many. 

\subsection{Characterizing the state and the projectors}
\label{sec: proj}
%For the rest of this section, let $(\ket{\psi}, \{\Pi_{A_x}^{a}\}, \{\Pi_{B_y}^{b}\})$ be a strategy achieving $p^*$. 

%\begin{lem}
%Let $p \in \m{C}_{qs}^{m,n,d,d}$ be a correlation on $\m{X},\m{Y}, \m{A}, \m{B}$ with correlation tables $T_{xy}$. For $i \in  [l]$, let  $d_i$ be positive integers. For $(x,y) \in \m{X} \times \m{Y}$, $i \in [l]$, let $T_{xy}^{(i)} \in \m{P}^{d_i\times d_i}$ and $w_{xy}^{(i)} \geq 0$ be such that, for each $(x,y)$, $\sum_{i=1}^{n} w_{xy}^{(i)} = 1$. Suppose $T_{xy} = \bigoplus_{i} \omega_{xy}^{(i)} T_{xy}^{(i)}$ (where $T_{xy}^{(i)}$ is over answers..). Then

%\begin{itemize}
    %\item[(i)] $\omega_{xy}^{(i)} = \omega_{x'y'}^{(i)} := \omega_i$ for all $(x,y), (x',y') \in \m{X} \times \m{Y}$.
    %\item[(ii)] Letting $p_i \in \m{C}_q^{m,n,d_i,d_i}$ be the correlation (over answers..) with correlation tables $T_{xy}^{(i)}$. Then $p = \bigoplus \omega_i p_i$.
%\end{itemize}
%\end{lem}

The following lemma establishes the existence of two local isometries which decompose any state achieving $p^*$ into two different ways (as anticipated in the ``cartoon proof'' of section \ref{sec: cartoon proof}).

\begin{lem}[Characterizing the state and projectors]
\label{lem: 2}
Let $(\ket{\psi} \in \m H_A \otimes \m H_B, \{\Pi_{A_x}^a\}, \{\Pi_{B_y}^b\})$ be a strategy inducing the ideal correlation $p^*$ from Definition \ref{def: ideal correlation}. Let $C = \frac{1}{1-\alpha^2}$. Then there exist two local isometries $\Phi$ and $\Phi'$ and (normalized) states $\ket{\mbox{aux}}$, $\ket{\mbox{aux}'}$ and $\ket{\mbox{aux}''}$ such that 
\begin{enumerate}[(i)]
    \item \label{item: lemma-2 first-isometry}
    \begin{itemize}
        \item $\Phi(\ket{\psi}) = \frac{1}{\sqrt{1+\alpha^2}}(\ket{00}+\alpha \ket{11}) \otimes \ket{\mbox{aux}}$
        \item $\Phi(\Pi_{A_0}^{0} \otimes I \ket{\psi}) = \frac{1}{\sqrt{1+\alpha^2}} \ket{00} \otimes \ket{\mbox{aux}}$
        \item $\Phi(\Pi_{A_0}^{1} \otimes I \ket{\psi}) = \frac{\alpha}{\sqrt{1+\alpha^2}} \ket{11} \otimes \ket{\mbox{aux}}$
    \end{itemize}
    \item \label{item: lemma-2 second-isometry}
    \begin{itemize}
	    \item $\Phi'(\ket{\psi}) = \frac{1}{\sqrt{C}}\ket{22} \otimes \ket{\mbox{aux}''} \oplus \sqrt{\frac{C-1}{C}}\frac{1}{\sqrt{1+\alpha^2}}(\ket{11}+\alpha \ket{00}) \otimes \ket{\mbox{aux}'}$
	    \item $\Phi'(\Pi_{A_2}^{0} \otimes I \ket{\psi}) = \sqrt{\frac{C-1}{C}} \frac{\alpha}{\sqrt{1+\alpha^2}}\ket{00} \otimes \ket{\mbox{aux}'}$
	    \item $\Phi'(\Pi_{A_2}^{1} \otimes I \ket{\psi}) = \sqrt{\frac{C-1}{C}} \frac{1}{\sqrt{1+\alpha^2}}\ket{11} \otimes \ket{\mbox{aux}'}$
	    \item $\Phi'(\Pi_{A_2}^{2} \otimes I \ket{\psi}) = \frac{1}{\sqrt{C}}\ket{22} \otimes \ket{\mbox{aux}''}$
    \end{itemize}
\end{enumerate}
\end{lem}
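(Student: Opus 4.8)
The plan is to obtain both isometries from tilted-CHSH self-testing (Lemma~\ref{Bamps lemma}): the first by applying it directly to a sub-correlation of $p^*$, and the second by first using the block-decomposition lemma (Lemma~\ref{lem: key}) to isolate a CHSH block and then self-testing that block.

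For \eqref{item: lemma-2 first-isometry}, I would restrict to the questions $x\in\{0,1\}$, $y\in\{0,1\}$. From the left table of Table~\ref{tab:txy1}, the outcome $2$ has probability zero on all these questions, so $\Pi_{A_x}^{2}\ket\psi=0$ and $(I\otimes\Pi_{B_y}^{2})\ket\psi=0$ for $x,y\in\{0,1\}$ (each squared norm is a sum of vanishing entries of the table). Merging the $2$-outcome into the $1$-outcome yields a binary-outcome strategy that acts on $\ket\psi$ exactly as the original and that induces precisely the tilted CHSH correlation for ratio $\alpha$. Lemma~\ref{Bamps lemma} then supplies a local isometry $\Phi$ with $\Phi(\ket\psi)=\cos\theta(\ket{00}+\alpha\ket{11})\otimes\ket{\mathrm{aux}}$, and the measurement part of self-testing gives $\Phi(\Pi_{A_0}^{a}\otimes I\ket\psi)=\tilde\Pi_{A_0}^{a}\otimes I\,\cos\theta(\ket{00}+\alpha\ket{11})\otimes\ket{\mathrm{aux}}$. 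Since $\alpha=\tan\theta$ gives $\cos\theta=\tfrac1{\sqrt{1+\alpha^2}}$ and the ideal observable $A_0=\sigma^z$ has $+1$-projector $\proj0$ and $-1$-projector $\proj1$, the three bullets follow at once (using $\Pi_{A_0}^{1}\ket\psi=(\Pi_{A_0}^{1}+\Pi_{A_0}^{2})\ket\psi$).

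For \eqref{item: lemma-2 second-isometry}, I would restrict to $x,y\in\{2,3\}$. The right table of Table~\ref{tab:txy2} shows that this restricted correlation is a direct sum with respect to the partitions $\m A=\{0,1\}\sqcup\{2\}$, $\m B=\{0,1\}\sqcup\{2\}$, namely $\tfrac{C-1}{C}\,p_1\oplus\tfrac1C\,p_2$, where $p_1$ is the tilted CHSH correlation for ratio $\alpha$ after relabeling questions $2\mapsto0$, $3\mapsto1$ and swapping the answer labels $0\leftrightarrow1$ on both sides, and $p_2$ is the trivial one-outcome correlation. Lemma~\ref{lem: key} then gives $\m H_A=\m H_A^{\mathrm{null}}\oplus\m H_A^{1}\oplus\m H_A^{2}$ (and similarly for $B$), states $\ket{\psi_1}\in\m H_A^{1}\otimes\m H_B^{1}$, $\ket{\psi_2}\in\m H_A^{2}\otimes\m H_B^{2}$ with $\norm{\ket{\psi_1}}^2=\tfrac{C-1}{C}$, $\norm{\ket{\psi_2}}^2=\tfrac1C$ and $\ket\psi=\ket{\psi_1}\oplus\ket{\psi_2}$ (the $\m A_1,\m A_2$ projectors sum to the identity, so there is no null component), the block strategies of \eqref{eq: lem-key-strategy}, and the identities $\Pi_{A_2}^{a}\ket\psi=\Pi_{A_2}^{a}|_{\m H_A^{i}}\ket{\psi_i}$ for $a\in\m A_i$ from \eqref{item:lemma-key 5}. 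On block $1$, reducing $p_1$ to the standard tilted CHSH correlation by relabeling and applying Lemma~\ref{Bamps lemma}, then post-composing with the local bit flip $\sigma^x\otimes\sigma^x$ on the qubit factors to absorb the $0\leftrightarrow1$ swap, produces an isometry sending $\ket{\psi_1}$ to $\sqrt{\tfrac{C-1}{C}}\tfrac1{\sqrt{1+\alpha^2}}(\ket{11}+\alpha\ket{00})\otimes\ket{\mathrm{aux}'}$ and $\Pi_{A_2}^{0}\ket\psi$, $\Pi_{A_2}^{1}\ket\psi$ to the vectors in the second and third bullets. On block $2$, I would Schmidt-decompose $\ket{\psi_2}/\norm{\ket{\psi_2}}$ and apply the local isometries that send each Schmidt vector to $\ket2$ tensored with a fresh index register, so that $\ket{\psi_2}\mapsto\tfrac1{\sqrt C}\ket{22}\otimes\ket{\mathrm{aux}''}$ and $\Pi_{A_2}^{2}|_{\m H_A^2}=\id$ maps to the same vector. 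Setting $\Phi'=\Phi'_A\otimes\Phi'_B$, where $\Phi'_A$ (resp.\ $\Phi'_B$) is assembled from these per-block isometries on the summands and arbitrarily on the null space, gives a local isometry; since $\ket\psi$ lives on the diagonal $\bigoplus_i(\m H_A^{i}\otimes\m H_B^{i})$ and the two block images are manifestly orthogonal (one supported on $\ket{22}$, the other on the remaining basis states), $\Phi'(\ket\psi)$ is the claimed direct sum and the four bullets follow.

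The step I expect to be the main obstacle is the bookkeeping in \eqref{item: lemma-2 second-isometry}: verifying that the $\{2,3\}\times\{2,3\}$ sub-table is genuinely the asserted direct sum (in particular that the singleton block, with weight $\tfrac1C$, is consistent across all four question pairs and that no cross terms appear), tracking the answer relabeling precisely enough that the self-tested state emerges in the flipped form $\ket{11}+\alpha\ket{00}$, and confirming that chaining \eqref{item:lemma-key 5} with the block-level measurement self-testing reproduces exactly the stated images of the $\Pi_{A_2}^{a}\otimes I\ket\psi$. The infinite-dimensionality of $\m H_A,\m H_B$ causes no difficulty, since both Lemma~\ref{lem: key} and Lemma~\ref{Bamps lemma} are valid for arbitrary Hilbert spaces.
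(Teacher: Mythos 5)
Your proof is correct and follows essentially the same route as the paper's: restrict to the question subsets $\{0,1\}\times\{0,1\}$ and $\{2,3\}\times\{2,3\}$, recognize (the possibly trivial) direct-sum structure, invoke Lemma~\ref{lem: key} and the tilted-CHSH self-test of Lemma~\ref{Bamps lemma}, account for the $0\leftrightarrow1$ relabeling, and assemble the two per-block isometries with an arbitrary extension on the null space. The only (harmless) deviation is in part~\eqref{item: lemma-2 first-isometry}, where you merge the zero-mass outcome $2$ into outcome $1$ to produce a genuinely binary strategy before self-testing, rather than invoking Lemma~\ref{lem: key} with weights $(1,0)$ as the paper does; both are valid, and yours is arguably slightly more direct there.

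One small point of phrasing worth tightening: your parenthetical ``the $\m A_1,\m A_2$ projectors sum to the identity, so there is no null component'' is ambiguous. The measurement projectors $\Pi_{A_x}^{\m A_1}+\Pi_{A_x}^{\m A_2}=I$ indeed force $\ket\psi=\ket{\psi_1}+\ket{\psi_2}$, but this does \emph{not} imply $\m H_A^{\mathrm{null}}=0$ (the subspaces $\m H_A^i$ are defined as the supports of the reduced states $\rho_A^i$, so their orthocomplement need not vanish). What matters, and what you in fact use later, is only that $\ket\psi$ has no component in the null subspace, and that $\Phi'_A,\Phi'_B$ are extended arbitrarily there.
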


\begin{proof}
    % \textit{(i)}: Let $\beta (\alpha)$ be the parameter for the tilted CHSH inequality that is maximally violated by $\ket{00} + \alpha \ket{11}$. 
    \eqref{item: lemma-2 first-isometry}: Let $p'$ be the restriction of $p^*$ to questions $x,y \in \{0,1\}$. From Table \ref{tab:txy1}, we know that $p'$ is the ideal tilted CHSH correlation for ratio $\alpha$ (except that it has an extra answer ``$2$'' which has zero probability mass). Applying the block decomposition lemma (Lemma \ref{lem: key}) with $\omega_1 = 1$ and $\omega_2 = 0$, we have that there exist subspaces $\m{H}_{A}^1 \seq \m{H}_A$ and $\m{H}_B^1 \seq \m{H}_B$ such that the strategy $(\ket{\psi} \in \m{H}_A^1 \otimes \m{H}_B^1, \{\Pi_{A_x}^a|_{\m{H}_{A}^1}\}_{a\in \{0,1\}}, \{\Pi_{B_y}^b|_{\m{H}_{B}^1}\}_{b\in \{0,1\}})$ \induce s the ideal tilted CHSH correlation.
    % with parameter $\beta(\alpha)$
    %Moreover $\ket{\psi} = \ket{\psi_1}$ as an equality on $\m{H}_A \otimes \m{H}_B$.
    
    By Lemma \ref{Bamps lemma}, the tilted CHSH correlation self-tests its ideal strategy, i.e.\ there exists a local isometry $\Phi_1 = \Phi_{1,A} \otimes \Phi_{1,B}$ with 
    $\Phi_{1,A}: \m{H}_A^1 \rightarrow \tilde{\m{H}}^{1}_{A} \otimes \tilde{\m{H}}^{1}_{A,aux}$ and 
    $\Phi_{1,B}: \m{H}_B^1 \rightarrow \tilde{\m{H}}^{1}_{B} \otimes \tilde{\m{H}}^{1}_{B,aux}$, and a (normalized) state $\ket{aux} \in \tilde{\m{H}}^{1}_{A,aux} \otimes \tilde{\m{H}}^{1}_{B,aux}$ such that $\Phi_1(\ket{\psi}) = \frac{1}{\sqrt{1+\alpha^2}}(\ket{00}+\alpha \ket{11}) \otimes \ket{aux}$. Moreover, by Lemma \ref{Bamps lemma}, it is also the case that 
    \begin{equation}
    	\Phi_{1}\left((\Pi_{A_0}^0|_{\m{H}_A^1} - \Pi_{A_0}^1|_{\m{H}_A^1}) \otimes I \ket{\psi}\right) = Z \otimes I \frac{1}{\sqrt{1+\alpha^2}}(\ket{00} + \alpha \ket{11})\otimes \ket{aux}. 
    \end{equation}
    Since $(I+Z)/2 = \proj{0}$ and $(I-Z)/2 = \proj{1}$, we deduce by linearity that 
    \begin{equation*}
    	\Phi_1\left(\Pi_{A_0}^0|_{\m{H}_A^1} \otimes I \ket{\psi}\right) = \frac{1}{\sqrt{1+\alpha^2}}\ket{00} \otimes \ket{aux}\text{ and }\Phi_1\left(\Pi_{A_0}^1|_{\m{H}_A^1} \otimes I \ket{\psi}\right) = \frac{\alpha}{\sqrt{1+\alpha^2}} \ket{11} \otimes \ket{aux}.
    \end{equation*}
    Letting $\Phi$ be any isometric extension of $\Phi_1$ to $\m{H}_A \otimes \m{H}_B$ and applying condition \eqref{item:lemma-key 5} of Lemma \ref{lem: key} gives \eqref{item: lemma-2 first-isometry}.
    
    \vspace{1.5mm}
    
\noindent \eqref{item: lemma-2 second-isometry}:  Let $p''$ be the restriction of $p^*$ to questions $x,y \in \{2,3\}$. Then from table \ref{tab:txy2} we have that $p'' = \omega_1 p_1 \oplus \omega_2 p_2$ where $p_1$ is the ideal tilted CHSH correlation (for ratio $\alpha$) and $p_2$ is the correlation in which answer $(2,2)$ has probability $1$ on all question pairs, and $\omega_1 = \frac{C-1}{C}$, $\omega_2 = \frac{1}{C}$.

	By Lemma \ref{lem: key}, there exist subspaces $\m H_A^\text{null}, \m H_B^\text{null}$, $\m{H}_A^1, \m{H}_A^2$, $\m{H}_B^1, \m{H}_B^2$ with $\m H_A = \m H_A^\text{null} \oplus \m{H}_A^1 \oplus \m{H}_A^2$ and $\m H_B = \m H_B^\text{null} \oplus \m{H}_B^1 \oplus \m{H}_B^2$,
    and strategies $S_1$ and $S_2$ with
    \begin{align*}
    S_1 &= \left(\frac{\ket{\psi_1}}{\|\ket{\psi}\|} \in \m{H}_A^1 \otimes \m{H}_B^1, \{\Pi_{A_x}^a|_{\m{H}_A^1}\}_{a\in \{0,1\}}, \{\Pi_{B_y}^b|_{\m{H}_B^1}\}_{b\in \{0,1\}}\right), \\
    S_2 &= \left(\frac{\ket{\psi_2}}{\|\ket{\psi_2}\|} \in \m{H}_A^2 \otimes \m{H}_B^2, \{\Pi_{A_x}^2|_{\m{H}_A^2}\}, \{\Pi_{B_y}^2|_{\m{H}_B^2}\}\right)
    \end{align*}
	such that $\|\ket{\psi_1} \|^2 = \frac{C-1}{C}$, $\|\ket{\psi_2} \|^2 = \frac{1}{C}$ and $\ket{\psi} = \ket{\psi_1} + \ket{\psi_2}$. Moreover, $S_1$ \induce s the ideal tilted CHSH correlation for ratio $\alpha$ (with the roles of the $0$ and $1$ answers flipped --- see Table \ref{tab:txy2}). As in the proof of \eqref{item: lemma-2 first-isometry}, we can apply Lemma \ref{Bamps lemma} to obtain
	local isometries $\Phi_1 = \Phi_{1,A} \otimes \Phi_{1,B}$ with 
		$\Phi_{1,A}: \m{H}_A^1 \rightarrow \tilde{\m{H}}^{1}_{A} \otimes \tilde{\m{H}}^{1}_{A,aux}$ and 
    $\Phi_{1,B}: \m{H}_B^1 \rightarrow \tilde{\m{H}}^{1}_{B} \otimes \tilde{\m{H}}^{1}_{B,aux}$, and a (normalized) state $\ket{aux'} \in \tilde{\m{H}}^{1}_{A,aux} \otimes \tilde{\m{H}}^{1}_{B,aux}$
	such that
\begin{enumerate}[(a)]
    \item $\Phi_1(\ket{\psi_1}) = \sqrt{\frac{C-1}{C}}\frac{1}{\sqrt{1+\alpha^2}}(\ket{11}+\alpha \ket{00}) \otimes \ket{aux'}$, (we have flipped the zero and one basis elements for later convenience)
    \item $\Phi_1(\Pi_{A_2}^1|_{\m{H}_A^1} \otimes I \ket{\psi_1}) = \sqrt{\frac{C-1}{C}} \frac{1}{\sqrt{1+\alpha^2}}\ket{11} \otimes \ket{aux'}$, and
    \item $\Phi_1(\Pi_{A_2}^0|_{\m{H}_A^1}  \otimes I \ket{\psi_1}) = \sqrt{\frac{C-1}{C}} \frac{\alpha}{\sqrt{1+\alpha^2}}\ket{00} \otimes \ket{aux'}$.
\end{enumerate}
where (b) and (c) are obtained similarly as in part \eqref{item: lemma-2 first-isometry} of this proof.

Now, let $\Phi_2 = \Phi_{2,A} \otimes \Phi_{2,B}$, with $\Phi_{2,A}: \m{H}_A^2 \rightarrow \tilde{\m{H}}^{2}_{A} \otimes \tilde{\m{H}}^{2}_{A,aux}$ and 
    $\Phi_{2,B}: \m{H}_B^2 \rightarrow \tilde{\m{H}}^{2}_{B} \otimes \tilde{\m{H}}^{2}_{B,aux}$ be a local isometry, and $\ket{aux''} \in \tilde{\m{H}}^{2}_{A,aux} \otimes \tilde{\m{H}}^{2}_{B,aux}$ a (normalized) state such that

\begin{enumerate}
\item[(d)] $\Phi_2(\ket{\psi_2}) = \frac{1}{\sqrt{C}}\ket{22} \otimes \ket{aux''}$.
\end{enumerate}
Such $\Phi_2$ and $\ket{aux''}$ trivially exist. 

%Let $\Phi_{1,A}' = \Phi_{1,A}|_{\mbox{Range}(\Pi_{A_2}^0 + \Pi_{A_2}^1)}$ (i.e.\ $\Phi_1$ when we restrict the domain to $\mbox{Range}(\Pi_{A_2}^0 + \Pi_{A_2}^1)$)
%and $\Phi_{1,B}' = \Phi_{1,B}|_{\mbox{Range}(\Pi_{B_2}^0 + \Pi_{B_2}^1)}$. Let $\Phi_{2,A}' =  \Phi_{2,A}|_{\mbox{Range}(\Pi_{A_2}^2)}$, $\Phi_{2,B}' =  \Phi_{2,B}|_{\mbox{Range}(\Pi_{B_2}^2)}$. 
Define 
\begin{itemize}
    \item $\Phi'_{A}: \m{H}_A^{1} \oplus  \m{H}_A^{2} \rightarrow (\tilde{\m{H}}^{(1)}_{A} \otimes \tilde{\m{H}}^{(1)}_{A,aux}) \oplus (\tilde{\m{H}}^{(2)}_{A} \otimes \tilde{\m{H}}^{(2)}_{A, aux})$ as $\Phi'_{A} = \Phi_{1,A} \oplus \Phi_{2,A}$
    \item $\Phi'_{B}: \m{H}_B^{1} \oplus  \m{H}_B^{2} \rightarrow (\tilde{\m{H}}^{(1)}_{B} \otimes \tilde{\m{H}}^{(1)}_{B,aux}) \oplus (\tilde{\m{H}}^{(2)}_{B} \otimes \tilde{\m{H}}^{(2)}_{B, aux})$ as $\Phi'_{B} = \Phi_{1,B} \oplus \Phi_{2,B}$
\end{itemize}
Let $\Phi_{A}''$ be any isometric extension of $\Phi_A'$ to $\m H_{A}$, and let $\Phi_{B}''$ be any isometric extension of $\Phi_B'$ to $\m H_{B}'$. Let $\Phi' = \Phi''_A \otimes \Phi_B''$. Then (a), (b), (c) and (d), together with condition \eqref{item:lemma-key 5} of Lemma \ref{lem: key}, imply that $\Phi'$ satisfies condition \eqref{item: lemma-2 second-isometry} of Lemma \ref{lem: 2}, as desired.

\end{proof}

We also need the following properties, obtained using the $y=4$ question on Bob's side.
\begin{lem}
\label{lem: 3}
Let $(\ket{\psi}, \{\Pi_{A_x}^a\}, \{\Pi_{B_y}^b\})$ be a strategy inducing $p^*$.
The following properties hold:
\begin{enumerate}[(i)]
    \item \label{item: lemma 9-1} 
	    $\Pi_{A_0}^{0} \ket{\psi} = \Pi_{B_4}^{0} \ket{\psi} =  (\Pi_{A_2}^2 + \Pi_{A_2}^0) \ket{\psi} $
    \item \label{item: lemma 9-2} 
	    $\Pi_{A_0}^{1} \ket{\psi} = \Pi_{B_4}^{1} \ket{\psi} =  \Pi_{A_2}^1 \ket{\psi}$
    \item \label{item: lemma 9-3} 
	    $\ket\psi = \Pi_{A_0}^{0} \otimes \Pi_{B_4}^0 \ket{\psi} + \Pi_{A_0}^{1}\otimes \Pi_{B_4}^{1} \ket{\psi}$
\end{enumerate}
\end{lem}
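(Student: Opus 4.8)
The plan is to read off all three identities from the correlation tables $T_{04}$ and $T_{24}$ in Tables \ref{tab:txy3} and \ref{tab:txy4}, using the standard fact that if a projective measurement outcome has probability zero against $\ket\psi$, then the corresponding projector annihilates $\ket\psi$, together with the completeness relations $\sum_a \Pi_{A_x}^a = I$ and $\sum_b \Pi_{B_y}^b = I$.

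\medskip

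First I would establish that Bob's $y=4$ measurement and Alice's $x=0$ measurement are ``locked together'' in the same way as in the proof of Lemma \ref{lem: key}. Concretely, from Table \ref{tab:txy4} (the $x=2,y=4$ table) and Table \ref{tab:txy3} (the $x=0,y=4$ table), the only nonzero entries have $b \in \{0,1\}$, and moreover the off-diagonal-in-$b$ entries force cross terms to vanish. The cleanest route is the chain of equalities used for Equation \eqref{eq: action on state}: for $x=0$, $\Pi_{A_0}^0 \ket\psi = \Pi_{A_0}^0 \otimes I\,\ket\psi = \Pi_{A_0}^0 \otimes (\sum_b \Pi_{B_4}^b)\ket\psi = \Pi_{A_0}^0 \otimes \Pi_{B_4}^0 \ket\psi$, where the last step uses that $\norm{\Pi_{A_0}^0 \otimes \Pi_{B_4}^1 \ket\psi}^2 = p^*(0,1|0,4) = 0$ and $\norm{\Pi_{A_0}^0 \otimes \Pi_{B_4}^2 \ket\psi}^2 = p^*(0,2|0,4) = 0$. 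Symmetrically $\Pi_{B_4}^0 \ket\psi = \Pi_{A_0}^0 \otimes \Pi_{B_4}^0 \ket\psi$, so $\Pi_{A_0}^0 \ket\psi = \Pi_{B_4}^0 \ket\psi$, and likewise $\Pi_{A_0}^1 \ket\psi = \Pi_{B_4}^1 \ket\psi$. This gives the first equality in each of \eqref{item: lemma 9-1} and \eqref{item: lemma 9-2}, and summing them (using $\Pi_{A_0}^0 + \Pi_{A_0}^1 = I$ since answer $2$ has zero weight on $x=0$, hence $\Pi_{A_0}^2\ket\psi = 0$) gives \eqref{item: lemma 9-3}: $\ket\psi = \Pi_{A_0}^0\otimes\Pi_{B_4}^0\ket\psi + \Pi_{A_0}^1\otimes\Pi_{B_4}^1\ket\psi$ (the cross terms vanish by the same zero-probability argument).

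\medskip

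Next I would compare Alice's $x=0$ and $x=2$ measurements via the common question $y=4$. From Table \ref{tab:txy4}, on questions $(2,4)$ the answer pairs with nonzero probability are $(0,0)$, $(1,1)$, and $(2,0)$. Running the same completeness-plus-zero-probability argument on Bob's side: $\Pi_{B_4}^0 \ket\psi = (\sum_a \Pi_{A_2}^a)\otimes \Pi_{B_4}^0 \ket\psi = (\Pi_{A_2}^0 + \Pi_{A_2}^2)\otimes \Pi_{B_4}^0\ket\psi$, because $\norm{\Pi_{A_2}^1 \otimes \Pi_{B_4}^0 \ket\psi}^2 = p^*(1,0|2,4) = 0$; and conversely $(\Pi_{A_2}^0 + \Pi_{A_2}^2)\ket\psi = (\Pi_{A_2}^0+\Pi_{A_2}^2)\otimes\Pi_{B_4}^0\ket\psi$ since $p^*(0,1|2,4)=p^*(2,1|2,4)=p^*(0,2|2,4)=p^*(2,2|2,4) = 0$. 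Hence $\Pi_{B_4}^0\ket\psi = (\Pi_{A_2}^0+\Pi_{A_2}^2)\ket\psi$, which combined with $\Pi_{A_0}^0\ket\psi = \Pi_{B_4}^0\ket\psi$ yields the remaining equality of \eqref{item: lemma 9-1}. Similarly $\Pi_{B_4}^1\ket\psi = \Pi_{A_2}^1\ket\psi$ (the only nonzero $(\cdot,1)$ entry on question $(2,4)$ is $(1,1)$), giving the remaining equality of \eqref{item: lemma 9-2}.

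\medskip

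I do not expect a genuine obstacle here: the whole lemma is an exercise in the ``zero probability $\Rightarrow$ projector kills the state'' principle, applied to the two correlation tables involving $y=4$, exactly as in the warm-up computations leading to Equation \eqref{eq: action on state} in Lemma \ref{lem: key}. The only point requiring a little care is bookkeeping of \emph{which} entries of Tables \ref{tab:txy3} and \ref{tab:txy4} are zero — in particular verifying that on question $(0,4)$ Alice never answers $2$ and Bob never answers $2$, and that on question $(2,4)$ the answer $(2,1)$, $(2,2)$, $(0,1)$, $(0,2)$, $(1,0)$, $(1,2)$ all have zero probability — so that the cross terms in the completeness expansions really do drop out. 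Once that is in hand, \eqref{item: lemma 9-1} and \eqref{item: lemma 9-2} follow immediately, and \eqref{item: lemma 9-3} follows by adding them and noting $\Pi_{A_0}^2\ket\psi = 0$ so that $\{\Pi_{A_0}^0,\Pi_{A_0}^1\}$ acts as a complete measurement on $\ket\psi$, while $\{\Pi_{B_4}^0,\Pi_{B_4}^1\}$ does likewise on Bob's side.
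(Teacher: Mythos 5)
Your proof is correct and reaches the paper's Lemma~\ref{lem: 3} by reading off the same zero entries from Tables~\ref{tab:txy3} and \ref{tab:txy4}; the only difference is the low-level argument used to conclude $\Pi_{A_0}^0\ket\psi = \Pi_{B_4}^0\ket\psi$ and its companions. The paper reads three scalars $\bra{\psi}\Pi_{A_0}^0\Pi_{B_4}^0\ket{\psi} = \|\Pi_{A_0}^0\ket\psi\|^2 = \|\Pi_{B_4}^0\ket\psi\|^2$ from the table and invokes the Cauchy--Schwarz equality condition, whereas you use the ``zero-probability kills the cross-term'' expansion $\Pi_{A_0}^0\ket\psi = \Pi_{A_0}^0 \otimes \Pi_{B_4}^0 \ket\psi = \Pi_{B_4}^0\ket\psi$ exactly as in the derivation of Equation~\eqref{eq: action on state} in Lemma~\ref{lem: key}; these are interchangeable and your version is, if anything, more consistent with the bookkeeping style already used in the paper.
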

\begin{proof}
    From correlation table \ref{tab:txy3}, we read out that $\bra{\psi} \Pi_{A_0}^0 \Pi_{B_4}^0 \ket{\psi} = \| \Pi_{A_0}^{0} \ket{\psi}\|^{2} = \| \Pi_{B_4}^{0} \ket{\psi}\|^{2}$. By the Cauchy-Schwarz inequality, this implies that $\Pi_{A_0}^{0} \ket{\psi} = \Pi_{B_4}^{0} \ket{\psi}$. Similarly, from correlation table \ref{tab:txy4}, we deduce $(\Pi_{A_2}^{2} + \Pi_{A_2}^{0}) \ket{\psi} = \Pi_{B_4}^{0} \ket{\psi}$, which yields \eqref{item: lemma 9-1}. We derive \eqref{item: lemma 9-2} analogously. Item \eqref{item: lemma 9-3} follows from combining the previous two items with the equality $(\Pi_{A_0}^0 + \Pi_{A_0}^1)\ket\psi = \ket\psi$.
\end{proof}

\subsection{Characterizing the Schmidt coefficients}
\label{sec: schmidt}

From now onwards, let $(\ket{\psi}, \{\Pi_{A_x}^a\}, \{\Pi_{B_y}^b\})$ be any strategy inducing $p^*$. 
In the previous subsection, we gave a partial characterization of the operators and state
 % of any strategy $(\ket{\psi}, \{\Pi_{A_x}^a\}, \{\Pi_{B_y}^b\})$ inducing $p^*$
 % (Lemma \ref{lem: 2} and \ref{lem: 3})
 . In this subsection, we make use of these properties to show that $\ket\psi$ has infinitely many non-zero Schmidt coefficients. 
 For a bipartite state $\ket{\phi}_{AB}$, we denote by $\Schmidt\left(\ket\phi_{AB}\right)$ the multiset\footnote{Here by multiset we mean a set with multiplicity, sometimes called an unordered list. For example, the multiset of Schmidt coefficients of the EPR pair is $(\frac1{\sqrt2}, \frac1{\sqrt2})$.} of non-zero Schmidt coefficients of $\ket{\phi}_{AB}$. Recall that the Schmidt coefficients $\set{\lambda_i}$ are the unique nonnegative real numbers so that $\ket\phi_{AB} = \sum_i \lambda_i \ket i_A\otimes \ket i_B$ for some bases of the $A$ and $B$ registers. Any such pair of bases is called a pair of \emph{Schmidt bases with respect to $\ket\phi$.}
 Usually the tensor product decomposition of the Hilbert space will be clear, in which case we'll simply write $\Schmidt(\ket\phi)$ without the subscripts.
 We will use the following basic fact about Schmidt coefficients; we provide a proof for completeness.

 \begin{lem}\label{lem:schmidt-sum}
 	Let $\ket\psi, \ket\phi,\ket\eta$ be states on $\m H_A \otimes \m H_B$ with $\ket\psi = \ket\phi + \ket \eta$. Define reduced densities
 	\begin{equation}
 	    \rho_A = \tr_B \proj \psi, \sigma_A = \tr_B \proj \phi, \tau_A = \tr_B \proj \eta
 	\end{equation}
 	on $\m H_A$.
 	Define $\rho_B, \sigma_B, \tau_B$ similarly. Suppose that $\ket\phi$ and $\ket\eta$ are ``orthogonal on both subsystems'' in the sense that $\sigma_A\tau_A = 0 = \sigma_B\tau_B$. 
 	 Then $\Schmidt(\ket\psi) = \Schmidt(\ket \phi) \sqcup \Schmidt(\ket\eta)$, where $\sqcup$ denotes disjoint union.
 \end{lem}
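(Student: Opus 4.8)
The plan is to use the orthogonality hypothesis $\sigma_A\tau_A = 0 = \sigma_B\tau_B$ to show that $\ket\phi$ and $\ket\eta$ live in ``orthogonal corners'' of $\m H_A \otimes \m H_B$, i.e.\ there exist orthogonal subspaces $\m K_A, \m L_A \seq \m H_A$ and $\m K_B, \m L_B \seq \m H_B$ with $\ket\phi \in \m K_A \otimes \m K_B$ and $\ket\eta \in \m L_A \otimes \m L_B$. First I would let $\m K_A = \overline{\operatorname{range}(\sigma_A)}$ (the closed span of the eigenvectors of $\sigma_A$ with nonzero eigenvalue) and $\m L_A = \overline{\operatorname{range}(\tau_A)}$, and similarly on the $B$ side. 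The condition $\sigma_A \tau_A = 0$ forces $\m K_A \perp \m L_A$: since $\sigma_A,\tau_A$ are positive, $\sigma_A\tau_A = 0$ implies $\operatorname{range}(\tau_A) \seq \ker(\sigma_A) = (\m K_A)^\perp$, and taking closures gives $\m L_A \perp \m K_A$. The standard fact that a pure state's support is contained in $\operatorname{range}(\text{reduced density}) \otimes \operatorname{range}(\text{other reduced density})$ gives $\ket\phi \in \m K_A \otimes \m K_B$ and $\ket\eta \in \m L_A \otimes \m L_B$.

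Next I would pick a pair of Schmidt bases for $\ket\phi$ supported in $\m K_A \otimes \m K_B$ and a pair for $\ket\eta$ supported in $\m L_A \otimes \m L_B$, so that $\ket\phi = \sum_i \mu_i \ket{k_i}_A \ket{k_i}_B$ with $\{\ket{k_i}_A\} \seq \m K_A$, $\{\ket{k_i}_B\}\seq \m K_B$, and $\ket\eta = \sum_j \nu_j \ket{l_j}_A \ket{l_j}_B$ with $\{\ket{l_j}_A\}\seq \m L_A$, $\{\ket{l_j}_B\}\seq \m L_B$. Because $\m K_A \perp \m L_A$ and $\m K_B \perp \m L_B$, the combined families $\{\ket{k_i}_A\} \cup \{\ket{l_j}_A\}$ and $\{\ket{k_i}_B\} \cup \{\ket{l_j}_B\}$ are each orthonormal, and
\[
    \ket\psi = \ket\phi + \ket\eta = \sum_i \mu_i \ket{k_i}_A\ket{k_i}_B + \sum_j \nu_j \ket{l_j}_A \ket{l_j}_B
\]
is manifestly a Schmidt decomposition of $\ket\psi$ (after extending to full bases if desired). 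By uniqueness of the multiset of nonzero Schmidt coefficients, $\Schmidt(\ket\psi) = \{\mu_i\} \sqcup \{\nu_j\} = \Schmidt(\ket\phi) \sqcup \Schmidt(\ket\eta)$.

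The only delicate point — and the one I would be most careful about — is the infinite-dimensional bookkeeping: ensuring that the Schmidt decompositions and the orthogonality arguments survive when $\m H_A, \m H_B$ are separable Hilbert spaces rather than finite-dimensional ones. The paper has already invoked the existence of Schmidt decompositions in infinite dimensions (in the proof of Lemma \ref{lem: key}), so I would rely on that; the positivity argument $\sigma_A\tau_A = 0 \Rightarrow \operatorname{range}(\tau_A)\perp \operatorname{range}(\sigma_A)$ is valid for bounded positive operators in general, and closing ranges is harmless. Everything else is the routine finite-rank-per-block linear algebra above, so no genuine obstacle remains.
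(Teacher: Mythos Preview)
Your argument is correct and essentially parallels the paper's. The only cosmetic difference is in how the ``common Schmidt basis'' is produced: the paper observes that $\sigma_A\tau_A=0$ (together with $\rho_A=\sigma_A+\tau_A$, which follows once the cross terms $\tr_B\ket\phi\bra\eta$ vanish by the $B$-side orthogonality) forces $\rho_A,\sigma_A,\tau_A$ to commute, takes a simultaneous eigenbasis on each side, and then argues that in these bases the coefficients satisfy $a_ib_i=0$; you instead go straight to the orthogonal support subspaces $\m K_A\perp\m L_A$, $\m K_B\perp\m L_B$ and concatenate separate Schmidt decompositions of $\ket\phi$ and $\ket\eta$. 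Your route sidesteps the small bookkeeping issue of matching the $A$- and $B$-side pairings in a common basis, so it is arguably cleaner, but the content is the same.
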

 \begin{proof}
 	A Schmidt basis for $\m H_A$ with respect to $\ket\psi$ is the same as an eigenbasis for the reduced density operator $\tr_B \proj\psi$. Using the orthogonality of $\ket\phi$ and $\ket\eta$, one can check that the three densities $\rho_A, \sigma_A, \tau_A$ commute. Therefore, the densities have a common eigenbasis. This is also a common Schmidt basis. After repeating the argument to find a common Schmidt basis on $\m H_B$, we can write the states as
 	\begin{equation}
 		\ket\psi = \sum_i \lambda_i\ket {ii},
 		\ket\phi = \sum_i a_i\ket {ii},\text{ and }
 		\ket\eta = \sum_i b_i\ket {ii},
 	\end{equation}
 	with $a_i + b_i = \lambda_i$. By the orthogonality of $\ket\eta$ and $\ket\phi$, we have $a_ib_i = 0$ for each $i$. This implies that for each $i$, exactly one of the following two equalities holds: $\lambda_i = a_i$ or $\lambda_i = b_i$. This yields the lemma.
 \end{proof}

\begin{lem}
\label{lem: 4}
    Let $\Phi$, $\Phi'$ and $\ket{aux}$, $\ket{aux'}$, $\ket{aux''}$
be the local isometries and auxiliary states from Lemma \ref{lem: 2}.
Let $S = \Schmidt\left(\ps\right)$, and let $S_2 = \Schmidt\left(\frac{1}{\sqrt{C}}\ket{22} \otimes \ket{aux''}\right)$. Then there exists a partition $S = S_0 \sqcup S_1$ such that:
\begin{itemize}
    \item $S_0 =  \Schmidt\left( \frac{1}{\sqrt{1+\alpha^2}}\ket{00} \otimes \ket{aux}\right) = S_2 \sqcup \Schmidt\left( \sqrt{\frac{C-1}{C}} \frac{\alpha}{\sqrt{1+\alpha^2}}\ket{00} \otimes \ket{aux'} \right)$
    \item $S_1 =  \Schmidt\left( \frac{\alpha}{\sqrt{1+\alpha^2}}\ket{11} \otimes \ket{aux}\right) = \Schmidt\left(\sqrt{\frac{C-1}{C}} \frac{1}{\sqrt{1+\alpha^2}}\ket{11} \otimes \ket{aux'} \right)$
\end{itemize}
\end{lem}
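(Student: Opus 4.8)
The plan is to extract both equalities by applying Lemma~\ref{lem:schmidt-sum} to the two decompositions of $\ket\psi$ provided by Lemma~\ref{lem: 2}, and then to match up the resulting pieces using the relations of Lemma~\ref{lem: 3}. First I would work with the isometry $\Phi$ from item~\eqref{item: lemma-2 first-isometry}. Since $\Phi$ is a local isometry, it preserves Schmidt coefficients, so $\Schmidt(\ket\psi) = \Schmidt\bigl(\frac{1}{\sqrt{1+\alpha^2}}(\ket{00}+\alpha\ket{11})\otimes\ket{\mbox{aux}}\bigr)$. The state $\frac{1}{\sqrt{1+\alpha^2}}\ket{00}\otimes\ket{\mbox{aux}}$ and $\frac{\alpha}{\sqrt{1+\alpha^2}}\ket{11}\otimes\ket{\mbox{aux}}$ are orthogonal on both subsystems (the $\ket0$ vs.\ $\ket1$ on the first tensor factor on each side does the job), and they sum to $\Phi(\ket\psi)$; moreover $\Phi(\Pi_{A_0}^0\otimes I\ket\psi)$ and $\Phi(\Pi_{A_0}^1\otimes I\ket\psi)$ are exactly these two summands. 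So Lemma~\ref{lem:schmidt-sum} gives a partition $S = S_0 \sqcup S_1$ with $S_0 = \Schmidt\bigl(\frac{1}{\sqrt{1+\alpha^2}}\ket{00}\otimes\ket{\mbox{aux}}\bigr)$ and $S_1 = \Schmidt\bigl(\frac{\alpha}{\sqrt{1+\alpha^2}}\ket{11}\otimes\ket{\mbox{aux}}\bigr)$. This defines the partition and establishes the first equality in each of the two bullets.

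Next I would do the same with $\Phi'$ from item~\eqref{item: lemma-2 second-isometry}. Again $\Phi'$ preserves Schmidt coefficients, so $S = \Schmidt(\Phi'(\ket\psi))$, and $\Phi'(\ket\psi)$ is presented as a sum (indeed a direct sum, so the two parts are automatically orthogonal on both subsystems) of $\frac{1}{\sqrt C}\ket{22}\otimes\ket{\mbox{aux}''}$ and $\sqrt{\frac{C-1}{C}}\frac{1}{\sqrt{1+\alpha^2}}(\ket{11}+\alpha\ket{00})\otimes\ket{\mbox{aux}'}$; the latter itself splits (orthogonally on both sides, via $\ket1$ vs.\ $\ket0$) into $\sqrt{\frac{C-1}{C}}\frac{1}{\sqrt{1+\alpha^2}}\ket{11}\otimes\ket{\mbox{aux}'}$ and $\sqrt{\frac{C-1}{C}}\frac{\alpha}{\sqrt{1+\alpha^2}}\ket{00}\otimes\ket{\mbox{aux}'}$, which by item~\eqref{item: lemma-2 second-isometry} are $\Phi'(\Pi_{A_2}^1\otimes I\ket\psi)$ and $\Phi'(\Pi_{A_2}^0\otimes I\ket\psi)$ respectively, while $\frac{1}{\sqrt C}\ket{22}\otimes\ket{\mbox{aux}''} = \Phi'(\Pi_{A_2}^2\otimes I\ket\psi)$. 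Applying Lemma~\ref{lem:schmidt-sum} twice gives a partition of $S$ into three pieces: $S_2 = \Schmidt\bigl(\frac{1}{\sqrt C}\ket{22}\otimes\ket{\mbox{aux}''}\bigr)$, the $\ket{00}$-piece, and the $\ket{11}$-piece.

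The crux is then to identify these pieces with $S_0$ and $S_1$. Here I would invoke Lemma~\ref{lem: 3}: item~\eqref{item: lemma 9-2} says $\Pi_{A_0}^1\ket\psi = \Pi_{A_2}^1\ket\psi$, so applying $\Phi'$ and matching with the $\Phi$-computation (both being local-isometry images of the same vector $\Pi_{A_0}^1\otimes I\ket\psi$, up to the fact that $\Phi$ and $\Phi'$ differ) — actually the cleaner route is to compare Schmidt coefficients directly: $\Schmidt(\Pi_{A_0}^1\otimes I\ket\psi) = S_1$ from the $\Phi$ analysis, and $\Schmidt(\Pi_{A_2}^1\otimes I\ket\psi)$ equals the $\ket{11}$-piece from the $\Phi'$ analysis, and these two vectors are equal by Lemma~\ref{lem: 3}\eqref{item: lemma 9-2}, so the $\ket{11}$-piece is $S_1$. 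Similarly item~\eqref{item: lemma 9-1} gives $\Pi_{A_0}^0\ket\psi = (\Pi_{A_2}^2 + \Pi_{A_2}^0)\ket\psi$; since the two summands on the right are images under $\Phi'$ of orthogonal-on-both-subsystems vectors, $\Schmidt(\Pi_{A_0}^0\otimes I\ket\psi)$ — which is $S_0$ — equals $S_2 \sqcup \Schmidt\bigl(\sqrt{\frac{C-1}{C}}\frac{\alpha}{\sqrt{1+\alpha^2}}\ket{00}\otimes\ket{\mbox{aux}'}\bigr)$. Since $S_0$, $S_1$ exhaust $S$ and $S_2$ together with the two $\Phi'$-pieces also exhaust $S$, the remaining set-theoretic bookkeeping forces the stated identities.

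I expect the main obstacle to be a subtlety about whether Lemma~\ref{lem:schmidt-sum} applies when one summand is zero or when the "orthogonal on both subsystems" hypothesis needs the auxiliary states to be handled carefully — e.g. the argument needs $\Pi_{A_2}^2\otimes I\ket\psi$ and $\Pi_{A_2}^0\otimes I\ket\psi$ to be orthogonal on \emph{both} subsystems, which follows because $\Phi'$ maps them into a direct-sum decomposition (the $\ket{22}$ block vs.\ the tilted-CHSH block) and within the CHSH block the $\ket{00}$ vs.\ $\ket{11}$ structure handles Bob's side — so I would need to check that the block structure genuinely gives orthogonality of reduced densities on Bob's register, not merely on Alice's. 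The rest is careful accounting of disjoint unions.
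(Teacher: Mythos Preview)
Your proposal is correct and follows essentially the same route as the paper. The only cosmetic difference is where the ``orthogonal on both subsystems'' hypothesis of Lemma~\ref{lem:schmidt-sum} is verified: the paper invokes Lemma~\ref{lem: 3}\eqref{item: lemma 9-3} to write $\ket\psi = \Pi_{A_0}^{0}\otimes\Pi_{B_4}^{0}\ket\psi + \Pi_{A_0}^{1}\otimes\Pi_{B_4}^{1}\ket\psi$ and reads off orthogonality on both sides from the projectors, whereas you pass to the $\Phi$-image and read it off from the $\ket{0}$/$\ket{1}$ labels there---both arguments yield the same partition $S_0 = \Schmidt(\Pi_{A_0}^0\ket\psi)$, $S_1 = \Schmidt(\Pi_{A_0}^1\ket\psi)$, and the rest of your matching via Lemma~\ref{lem: 3}\eqref{item: lemma 9-1},\eqref{item: lemma 9-2} and $\Phi'$ is exactly what the paper does. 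Your worry about Bob-side orthogonality for the $\Phi'$ pieces is unfounded: since $\Phi'$ is local, the reduced densities transform as $\sigma_B \mapsto \Phi'_B\sigma_B\Phi_B'^{\dagger}$, and the manifest orthogonality in the image (the $\ket{2}$ block versus the $\ket{0}$/$\ket{1}$ block) pulls back.
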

\noindent
Notice that these two equalities give us two different correspondences between the Schmidt coefficients of $\ket{aux}$ and $\ket{aux'}$, where one involves multiplying by $\alpha$ and the other involves dividing by $\alpha$.

\begin{proof}
	Recall from Lemma \ref{lem: 3} that
	 % $\ket{\psi} = \Pi_{A_0}^{0}\ket{\psi} + \Pi_{A_0}^{1} \ket{\psi} = \Pi_{A_0}^{0} \otimes \Pi_{B_4}^0 \ket{\psi} + \Pi_{A_0}^{1}\otimes \Pi_{B_4}^{1} \ket{\psi}$
	 $\ket{\psi} = \Pi_{A_0}^{0} \otimes \Pi_{B_4}^0 \ket{\psi} + \Pi_{A_0}^{1}\otimes \Pi_{B_4}^{1} \ket{\psi}$
	 % , where the last equality follows immediately from Lemma \ref{lem: 3}
	 . 
	 We deduce by Lemma \ref{lem:schmidt-sum} that $S$ can be partitioned into two sets $S_0$ and $S_1$, where 
   \begin{equation}
       S_0 =  \Schmidt\left( \Pi_{A_{0}}^{0} \ket{\psi}\right) \,\,\,\mbox{and    }
       S_1 =  \Schmidt\left( \Pi_{A_{0}}^{1} \ket{\psi}\right). \label{eq: 22}
   \end{equation}

%Further, let $\Phi$, $\Phi'$ and $\ket{aux}$, $\ket{aux'}$, $\ket{aux''}$
%be local isometries and states from Lemma \ref{lem: 2}. 
\noindent
Since local isometries preserve Schmidt coefficients,  $\Phi(\ket{\psi})$, $\Phi'(\ket{\psi})$ and $\ket{\psi}$ have the same set of Schmidt coefficients $S$. Moreover, Lemma \ref{lem: 2} gives 
\begin{equation}
	\Phi(\Pi_{A_{0}}^{0} \ket{\psi}) = \frac{1}{\sqrt{1+\alpha^2}} \ket{00} \otimes \ket{aux}
	\text{ and }
	\Phi(\Pi_{A_{0}}^{1} \ket{\psi}) = \frac{\alpha}{\sqrt{1+\alpha^2}} \ket{11} \otimes \ket{aux}. 
\end{equation}
By direct substitution, 
\begin{equation}
	\label{eq: 23}
       S_0 =  \Schmidt\left( \frac{1}{\sqrt{1+\alpha^2}} \ket{00} \otimes \ket{aux}\right) 
       \text{ and }  
       S_1 =  \Schmidt\left( \frac{\alpha}{\sqrt{1+\alpha^2}} \ket{11} \otimes \ket{aux}\right). 
   \end{equation}
\noindent
By Lemma \ref{lem: 3}, we also have $\Pi_{A_0}^{0} \ket{\psi} = (\Pi_{A_2}^{2} + \Pi_{A_2}^{0}) \ket{\psi}$ and $\Pi_{A_0}^{1}\ket\psi = \Pi_{A_2}^{1} \ket{\psi}$. Moreover, from Lemma \ref{lem: 2}, we also have 
$\Phi'((\Pi_{A_2}^{2} + \Pi_{A_2}^{0}) \ket{\psi} )= \frac{1}{\sqrt{C}}\ket{22} \otimes \ket{aux''} + \sqrt{\frac{C-1}{C}} \frac{\alpha}{\sqrt{1+\alpha^2}}\ket{00} \otimes \ket{aux'}$ and $\Phi'(\Pi_{A_2}^{1} \ket{\psi}) = \sqrt{\frac{C-1}{C}} \frac{1}{\sqrt{1+\alpha^2}}\ket{11} \otimes \ket{aux'}$. Then this implies 

\begin{align}
       S_0 &=  \Schmidt\left( \frac{1}{\sqrt{C}}\ket{22} \otimes \ket{aux''} + \sqrt{\frac{C-1}{C}} \frac{\alpha}{\sqrt{1+\alpha^2}}\ket{00} \otimes \ket{aux'}\right) \nonumber\\
       &= S_2 \sqcup \Schmidt\left(\sqrt{\frac{C-1}{C}} \frac{\alpha}{\sqrt{1+\alpha^2}}\ket{00} \otimes \ket{aux'}\right),   \,\,\mbox{and    } \label{eq: 25} \\
       S_1 &=  \Schmidt\left( \sqrt{\frac{C-1}{C}} \frac{1}{\sqrt{1+\alpha^2}}\ket{11} \otimes \ket{aux'}\right). \label{eq: 26}
   \end{align}

%Finally, notice that  $\Schmidt\left(\Phi'(\Pi_{A_2}^{2} + \Pi_{A_2}^{0}) \ket{\psi}\right) = S_2 \sqcup \Schmidt\left(\sqrt{\frac{C-1}{C}} \frac{\alpha}{\sqrt{1+\alpha^2}}\ket{00} \otimes \ket{aux'}\right)$, as desired. 
Putting together Equations \eqref{eq: 22} through \eqref{eq: 26} gives the statement of the Lemma.
\end{proof}

\begin{theorem}
	Let $p^*$ be the ideal correlation introduced in Definition \ref{def: ideal correlation}.
 Let $(\ket{\psi}, \{\Pi_{A_x}^a\}, \{\Pi_{B_y}^b\})$ be any strategy inducing $p^*$. Then $\ps$ has infinitely many non-zero Schmidt coefficients.
\end{theorem}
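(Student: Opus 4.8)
The plan is to use Lemma~\ref{lem: 4} to set up a bijection from the multiset $S$ of nonzero Schmidt coefficients of $\ket\psi$ onto a \emph{proper} subset of itself, which is impossible for a finite multiset. First I would unpack the two equalities in Lemma~\ref{lem: 4}. Both $\ket{00}\otimes\ket{aux}$ and $\ket{11}\otimes\ket{aux}$ have Schmidt coefficients equal to those of $\ket{aux}$ scaled by $\tfrac{1}{\sqrt{1+\alpha^2}}$ and $\tfrac{\alpha}{\sqrt{1+\alpha^2}}$ respectively; similarly the states built from $\ket{aux'}$ scale the Schmidt coefficients of $\ket{aux'}$ by $\sqrt{\tfrac{C-1}{C}}\tfrac{\alpha}{\sqrt{1+\alpha^2}}$ and $\sqrt{\tfrac{C-1}{C}}\tfrac{1}{\sqrt{1+\alpha^2}}$. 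Writing $T = \Schmidt(\ket{aux})$ and $T' = \Schmidt(\ket{aux'})$ as multisets, the lemma then says: $S_1 = \tfrac{\alpha}{\sqrt{1+\alpha^2}}\,T = \sqrt{\tfrac{C-1}{C}}\tfrac{1}{\sqrt{1+\alpha^2}}\,T'$, and $S_0 = \tfrac{1}{\sqrt{1+\alpha^2}}\,T = S_2 \sqcup \big(\sqrt{\tfrac{C-1}{C}}\tfrac{\alpha}{\sqrt{1+\alpha^2}}\,T'\big)$, with $S = S_0\sqcup S_1$ and $S_2$ nonempty (it contains $\tfrac{1}{\sqrt C}$ times the Schmidt coefficients of $\ket{aux''}$, which is a nonzero state).

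The core of the argument is the following. From the first equality, $T' = \tfrac{\alpha}{\sqrt{C-1}/\sqrt C}\cdot T \cdot \tfrac{1}{\cdot}$ — more precisely, $T = \sqrt{\tfrac{C-1}{C}}\,\tfrac{1}{\alpha}\,T'$, i.e.\ rescaling $T'$ by the constant $c := \tfrac{1}{\alpha}\sqrt{\tfrac{C-1}{C}}$ gives $T$. Substituting into the description of $S_0$: on one hand $S_0 = \tfrac{1}{\sqrt{1+\alpha^2}}T = \tfrac{c}{\sqrt{1+\alpha^2}}T'$; on the other hand $S_0 = S_2 \sqcup \tfrac{\alpha}{\sqrt{1+\alpha^2}}\sqrt{\tfrac{C-1}{C}}\,T' = S_2 \sqcup \tfrac{\alpha^2 c}{\sqrt{1+\alpha^2}}T'$ (using $\alpha\sqrt{(C-1)/C} = \alpha^2 c$). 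Since $0 < \alpha < 1$, we have $\alpha^2 c < c$, so the two multiset expressions for $S_0$ are both rescalings of $T'$ but by different factors, with the smaller-factor copy missing exactly the multiset $S_2$. This forces an infinite descent: if $\abs{T'} = \abs{S_0} - \abs{S_2} < \infty$ then $\abs{S_0} = \abs{T'}$ from the first expression and $\abs{S_0} = \abs{T'} + \abs{S_2} > \abs{T'}$ from the second — contradiction, since $\abs{S_2}\geq 1$. Hence $T'$ (equivalently $S$) is infinite, so $\ket\psi$ has infinitely many nonzero Schmidt coefficients. I would present this cleanly as: assume for contradiction $S$ is finite; then all of $T, T', S_0, S_1, S_2$ are finite; comparing cardinalities in $S_0 = \tfrac{1}{\sqrt{1+\alpha^2}}T$ and $S_0 = S_2\sqcup\big(\sqrt{\tfrac{C-1}{C}}\tfrac{\alpha}{\sqrt{1+\alpha^2}}T'\big)$ together with $\abs{T} = \abs{T'}$ (from $S_1 = \tfrac{\alpha}{\sqrt{1+\alpha^2}}T = \sqrt{\tfrac{C-1}{C}}\tfrac{1}{\sqrt{1+\alpha^2}}T'$) gives $\abs{T} = \abs{S_0} = \abs{T'} + \abs{S_2} = \abs{T} + \abs{S_2}$, forcing $\abs{S_2} = 0$, contradicting that $\ket{aux''}$ is normalized (hence $S_2\neq\emptyset$).

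The main subtlety — and the step I would be most careful about — is making sure the cardinality bookkeeping is airtight as a statement about \emph{multisets}: scaling a multiset of positive reals by a fixed positive constant is a bijection preserving multiplicities, so it preserves cardinality, and disjoint union adds cardinalities. With $0<\alpha<1$ all the scaling factors are strictly positive and finite, so no coefficient is sent to zero or collapsed. The only place actual numerics matter is verifying $S_2\neq\emptyset$, which is immediate since $\tfrac{1}{\sqrt C}\ket{22}\otimes\ket{aux''}$ is a nonzero vector and thus has at least one nonzero Schmidt coefficient. Everything else — the existence of $\Phi,\Phi',\ket{aux},\ket{aux'},\ket{aux''}$ and the partition $S = S_0\sqcup S_1$ with the stated equalities — is supplied verbatim by Lemma~\ref{lem: 2}, Lemma~\ref{lem: 3}, and Lemma~\ref{lem: 4}, so the proof is short once those are in hand. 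This also matches the ``bijection with a proper subset of itself'' picture sketched in the cartoon proof of Section~\ref{sec: cartoon proof}, with the role of the parity obstruction replaced by the strict inequality $\alpha^2 < 1$.
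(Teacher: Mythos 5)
Your proof is correct and takes essentially the same approach as the paper: both extract from Lemma~\ref{lem: 4} the two rescaling bijections (by $\alpha$) between $S_0$, $S_1$, and $S_0\setminus S_2$, and conclude from $S_2\neq\emptyset$ that $S_0$ is infinite. The only cosmetic difference is that you present the final step as a cardinality count ($\abs{T} = \abs{T} + \abs{S_2}$ is impossible for finite $\abs{T}$) whereas the paper explicitly composes $f$ and $g$ into a bijection from $S_0$ onto the proper subset $S_0\setminus S_2$; these are the same argument.
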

\begin{proof}
 Let $\ket{aux},\ket{aux'}$, $S_0$, $S_1$ and $S_2$ be as in Lemma \ref{lem: 4}. Recall from Lemma \ref{lem: 4} that
\begin{equation}
	S_0 =  \Schmidt\left( \frac{1}{\sqrt{1+\alpha^2}}\ket{00} \otimes \ket{aux}\right)\text{ and }S_1 = \Schmidt\left(\frac{\alpha}{\sqrt{1+\alpha^2}}\ket{11} \otimes \ket{aux}\right). 
\end{equation}
 Then we can rewrite these sets as
 \begin{equation}
 	S_0 = \left\{\frac{1}{\sqrt{1+\alpha^2}} \lambda: \lambda \in \Schmidt\left(\ket{aux}\right)\right\}
 	\text{ and }
 	S_1 = \left\{\frac{1 }{\sqrt{1+\alpha^2}}\alpha\lambda : \,\lambda \in \Schmidt\left(\ket{aux}\right)\right\}
 \end{equation}
 Notice that there is a bijection $f:S_0\to S_1$ such that $f(\lambda) = \alpha \lambda$. Again from Lemma \ref{lem: 4} we have 
 \begin{equation*}
 S_0 = S_2 \sqcup \Schmidt\left( \sqrt{\frac{C-1}{C}} \frac{\alpha}{\sqrt{1+\alpha^2}}\ket{00} \otimes \ket{aux'} \right)
 \text{ and }
 S_1 = \Schmidt\left(\sqrt{\frac{C-1}{C}} \frac{1}{\sqrt{1+\alpha^2}}\ket{11} \otimes \ket{aux'} \right). 
 \end{equation*}

 Then we can rewrite $S_0\setminus S_2$ and $S_1$ as
 % \begin{equation}
	%  S_0 \setminus S_2 = \left\{\sqrt{\frac{C-1}{C}}\frac{\alpha}{\sqrt{1+\alpha^2}} \lambda : \lambda \in \Schmidt\left(\ket{aux'}\right)\right\}
	%  \text{ and }
	%  S_1 = \left\{\sqrt{\frac{C-1}{C}}\frac{1}{\sqrt{1+\alpha^2}} \lambda: \lambda \in \Schmidt\left(\ket{aux'}\right)\right\}. 
 % \end{equation}
 \begin{equation*}
	 S_0 \setminus S_2 = \left\{\sqrt{\frac{C-1}{C}} \frac{\alpha}{\sqrt{1+\alpha^2}} \lambda : \lambda \in \Schmidt\left(\ket{aux'}\right)\right\}
	 \text{ and }
	 S_1 = \left\{\sqrt{\frac{C-1}{C}} \frac{1}{\sqrt{1+\alpha^2}} \lambda: \lambda \in \Schmidt\left(\ket{aux'}\right)\right\}. 
 \end{equation*}
 Notice that there is a bijection $g: S_1\to S_0 \setminus S_2$ such that $g(\lambda) = \alpha\lambda$.
 
Composing the maps $f$ and $g$ yields a bijection between $S_0$ and $S_0 \setminus S_2$. Since $S_2$ is nonempty, this implies that $S_0$ must be infinite. 
\end{proof}
One can extend this proof a bit farther. Repeated applications of the map $f\circ g$ show that $S_0$ has an infinite descending sequence of the form $(\lambda, \alpha^2\lambda, \alpha^4\lambda,\ldots)$. One more application $f$ then shows that $S$ has an infinite sequence $(\lambda, \alpha\lambda, \alpha^2\lambda, \alpha^3\lambda,\ldots)$ This can be used to obtain some quantitative bounds on the dimension required to induce a correlation close to the ideal one. We do not prove this quantitative bound because much more useful bounds already exist for correlations witnessing the separation $\m C_{qs}\neq \m C_{qa}$. 

\section{Conclusion}
\label{sec: conclusion}
In this work, we answered affirmatively the long-standing question of whether $\m C_q$ is strictly contained in $\m C_{qs}$. We explicitly provided a correlation, on question sets of size $4$ and $5$ and answer sets of size $3$, that can be \induce d by an infinite-dimensional quantum strategy but not a finite-dimensional one. The construction of the correlation and the proof of separation are inspired by self-techniques which allow to characterize the structure of quantum strategies achieving correlations that possess a direct sum form. 

A promising avenue for further investigation is the possibility of applying such techniques to study the $I_{3322}$ inequality \cite{I3322Froissart}, which is also suspected to witness separation of $\m C_q$ and $\m C_{qs}$ (but with slightly smaller question and answer sets than ours). The $(3,3,2,2)$ scenario is the simplest one that is suspected to separate $\m C_q$ and $\m C_{qs}$. Numerical evidence \cite{I3322Pal} suggests that the ideal measurements achieving the conjectured maximal violation have a block-diagonal form. This suggests that the study of this inequality is potentially amenable to ideas and techniques from our work.

\section*{Acknowledgements}
The authors thank Laura Man\v cinska, William Slofstra and Thomas Vidick for useful discussions. The authors also thank Thomas Vidick for valuable comments on an earlier version of this paper.  
A.C. is supported by the Kortschak Scholars program and AFOSR YIP award number FA9550-16-1-0495. J.S. is supported by NSF CAREER Grant CCF-1553477 and the Mellon Mays Undergraduate Fellowship.

\bibliographystyle{alpha}
\bibliography{references}

\newpage

\end{document}